\title{Spectral theory of $p$-adic unitary operator}
\author{Zhao Tianhong\\{\footnotesize University of Science and Technology of China}\\{\footnotesize Email: \texttt{ustczth@mail.ustc.edu.cn}}}
\date{2023/10/19}
\chardef\bslash=`\\ % p. 424, TeXbook
\newtheorem{thm}{Theorem}[section]
\newtheorem{cor}[thm]{Corollary}
\newtheorem{lem}[thm]{Lemma}
\newtheorem{prop}[thm]{Proposition}
\newtheorem{ax}{Axiom}
\newtheorem{ex}{Example}[section]
\theoremstyle{definition}
\newtheorem{defn}{Definition}[section]
\newtheorem{q}{Question}[section]
\theoremstyle{remark}
\newtheorem{rem}{Remark}[section]
\newtheorem*{notation}{Notation}
\newcommand{\R}{\mathbb{R}}
\newcommand{\Z}{\mathbb{Z}} 
\newcommand{\N}{\mathbb{N}}
\newcommand{\Q}{\mathbb{Q}}
\newcommand{\CC}{\mathbb{C}}
\newcommand{\G}{\mathbb{G}_m}
\newcommand{\Fp}{\mathbb{F}_p}
\newcommand{\GL}{\operatorname{GL}} 
\newcommand{\Spec}{\operatorname{Spec}} 
\newcommand{\Hom}{\operatorname{Hom}} 
\newcommand{\Berk}{\operatorname{Berk}} 
\newcommand{\Max}{\operatorname{Max}} 
\newcommand{\Aff}{\operatorname{Aff}} 
\newcommand{\Rt}{\operatorname{Root}} 
\newcommand{\End}{\operatorname{End}} 
\newcommand{\Fct}{\operatorname{Fct}} 
\newcommand{\Comp}{\operatorname{Comp}} 
\newcommand{\coker}{\operatorname{coker}} 
\newcommand{\Gal}{\operatorname{Gal}} 
\newcommand{\Sp}{\operatorname{Sp}} 
\newcommand{\AB}{\textbf{$A-\textup{Banach}$}} 
\newcommand{\GB}{\textbf{$\mathbb{G}_m-\textup{Banach}$}} 
\newcommand{\OA}{\textbf{$\mathcal{O}_p-\textup{Banach Alg}$}}
\begin{document}
\maketitle
\begin{abstract}
	The $p$-adic unitary operator $U$ is defined as an invertible operator on $p$-adic ultrametric Banach space such that $\left |U\right |=\left |U^{-1}\right |=1$. We point out $U$ has a spectral measure valued in $\textbf{projection functors}$, which can be explained as the measure theory on the formal group scheme. The spectrum decomposition of $U$ is complete when $\psi$ is a $p$-adic wave function. We study $\textbf{the Galois theory of operators}$. The abelian extension theory of $\Q_p$ is connected to the topological properties of the $p$-adic unitary operator. We classify the $p$-adic unitary operator as three types: \textbf{Teichmüller type}, \textbf{continuous type}, \textbf{pro-finite type}. Finally, we establish a $\textbf{framework of $p$-adic quantum mechanics}$, where projection functor plays a role of quantum measurement.
\end{abstract}

\section{Introduction}
This paper aims to establish the spectral theory of the $p$-adic unitary operator and construct a framework for $p$-adic quantum mechanics. We apply the number theory on functional analysis to connect several areas. The physical motivation is to find the invariant structure in quantum mechanics by changing the base field. The previous work \cite{zth} of $p$-adic Hermite operator gives a table which compares the concept in $p$-adic functional analysis with the usual functional analysis over $\R,\CC$.
\begin{table}[!htbp]
	\centering
	\begin{tabular}{|c|c|c|} 
		\hline
		& Archimedean & Non-Archimedean \\ 
		\hline
		base field & $\R,\CC$ & $K$,$K$ is a extension of $\Q_p$\\
		\hline
		Banach space & Hilbert space & ultrametric Banach space\\
		\hline
		norm & $|x+y|^2+|x-y|^2=2(|x|^2+|y|^2)$ & $|x+y|\le max(|x|,|y|)$\\ 
		\hline
		orthogonal projection & $|x|^2=|\pi(x)|^2+|\pi^{\perp}(x)|^2$ & $|x|=max(|\pi(x)|,|\pi^{\perp}(x)|)$ \\ 
		\hline
		Banach algebra & $C^{*}$-Algebra & ultrametric Banach algebra\\
		\hline
		Galois action & Hermite conjugate $\dagger$ & Frobenius map $\sigma$\\
		\hline
	\end{tabular}
\end{table}

The spectral theory, which involves the conventional Hermite operator and unitary operator, serves as a conceptual tool in quantum mechanics. We suggest a categorical point of view that substitutes the orthogonal projection with the projection functor. This implies that the language of quantum mechanics could extend beyond operators to include functors.

The spectral measure of operators comes from measure theory on group scheme, which can be understood as a functorial measure theory. Moreover, the structures of base field has a deep influence on the properties of operators. This leads us to study the Galois theory of operators. The relationship between Galois groups and operators could be listed as follows:
\begin{table}[!htbp]
	\centering{
		\begin{tabular}{|c|c|c|} 
			\hline
			field & Galois group & operator \\ 
			\hline
			$\R,\CC$ & $\Gal\left(\CC\mid \R\right)$ & Hermite operator \\
			\hline
			$\R,\CC$ & $\Gal\left(\CC\mid \R\right)$ & unitary operator   \\
			\hline
			$\Q_p,\Q_p^{ur}$ & $\Gal\left(\Q_p^{ur}\mid \Q_p\right)$ & $p$-adic Hermite operator \\ 
			\hline			
			$\Q_p,\Q_p^{ab}$ & $\Gal\left(\Q_p^{ab}\mid \Q_p\right)$ & $p$-adic unitary operator \\ 
			\hline
		\end{tabular}
	}
\end{table}

We refer to \cite{Acourse} for the basic concepts in $p$-adic analysis. We refer to \cite{NAUO} \cite{NAOA} \cite{PO} for the recently progress in $p$-adic functional analysis. The connection between analytic geometry and spectral theory is discussed by \cite{Berk}. We refer to \cite{FSFG} for theory of formal scheme and formal group. 
We refer to \cite{pMP} \cite{pQM} for previous work in $p$-adic quantum mechanics. Noted that the $p$-adic wave function take values in usual complex field $\CC$ in their papers. The wave function in this paper take values in $\CC_p$.

\section{Definition}
\begin{notation}
	
	\begin{description}
		\item 
		\item[$\CC_p$] $p$-adic complex field.
		\item[$\mathcal{O}_p$] Integral ring of $\CC_p$.
		\item[$\mathfrak{m}_p$] Maximum ideal of $\mathcal{O}_p$.
		\item[$\G$] Formal group scheme over $\mathcal{O}_p$.
		\item[$A$] strict ultrametric Banach algebra.
		\item[$\R_+$] The internal $\left [0,+\infty\right )\subset \R$
		\item[$\varepsilon$] $p$-adic number in $\mathfrak{m}_p$ which is used to do reduction. 
		\item[$\underset{\varepsilon \in \mathfrak{m}_p^{+}}{\varprojlim} A/\varepsilon  A$] The completion of $A$ in $p$-adic topology.
		\item[$\Comp A$] The set of open ideal of $A$.
		\item[$\Pi_{I}$] The left projection functor with respect to open ideal $I$.
		\item[$\pi_{I}$] The right projection functor with respect to open ideal $I$.
		\item[$\Q_p^{ur}$] The maximum unramified extension of $\Q_p$.
		\item[$\Q_p^{ab}$] The maximum abelian extension of $\Q_p$.
		\item[$\Gal\left(\CC_p \mid \Q_p\right) $] The Galois group of extension $\CC_p \mid \Q_p$ consists of the continuous $\Q_p$-isomorphisms.
	\end{description}
\end{notation}

Let $\CC_p$ be the $p$-adic complex numbers, $\left |-\right |_p$ be the $p$-adic norm over $\CC_p$, $\mathcal{O}_p$ be the ring :
$$
\mathcal{O}_p=\left \{ x\in \CC_p,\left | x \right |_p\le 1  \right \}.
$$
Let $\G$ be the $\mathcal{O}_p$-Banach algebra: 
$$
\G: =\left \{ f(t)=\sum_{n=-\infty}^{\infty}a_nt^n,a_n \in \mathcal{O}_p,a_n \to 0,\ n \to \pm \infty \right \}. \\
$$
with the norm $\left | f(t) \right |_{\G} =\sup_{n\in \Z}\left |a_n\right |_p $.
Let $X$ be a $\CC_p$-ultrametric Banach space, $\left |-\right |_X$ be the norm. Let $M$ be the unit ball:
$$
M: =\left \{ x\in X,\left | x \right |_X \le 1  \right \} .
$$
\begin{defn}
Let $U: X \to X$ be a bounded invertible $\CC_p$-linear operator, we say $U$ is a $p$-adic unitary operator if one of the following equivalent conditions holds:
\begin{align*}
&1.U(X_M)=X_M.\\
&2.\left | Ux \right |_M=\left |x \right |_M, \forall x \in X_M.\\
&3.\left | U \right | =\left | U^{-1} \right | =1 ,\textup{where} \left |- \right |\textup{ is operator norm}.
\end{align*}
\end{defn}
It's equivalent to say that a $p$-adic unitary operator $U$ gives $M$ a $\G$-module structure, where the action of $\G$ on $M$ is given by:
\begin{align*}
\G \times M &\longrightarrow  M\\
\left (  \sum_{n=-\infty}^{\infty}a_nt^n ,x\right )  &\longmapsto    \sum_{n=-\infty}^{\infty}\left ( a_nU^n \left ( x\right )\right ) .
\end{align*}
So the spectral theory of $p$-adic unitary operator is dominated by $\G$. In scheme-theoretic sense $\G$ is a formal group scheme. Let $R$ be an arbitary $\mathcal{O}_p$-ultrametric Banach algebra ( not necessary commutative ), $\Hom(\G,R)$ be the set of contraction morphism of $\mathcal{O}_p$-Banach algebra. We have:
$$
\Hom(\G,R) \overset{1:1}{\longleftrightarrow} U(R)=\left \{ x\in R,\left | x \right |=\left | x^{-1} \right |=1  \right \}
$$
where $U(R)$ is group consisted of all $p$-adic unitary elements in $R$.\\
Following the Grothendieck's philosophy, $\Hom(\G,-)$ defines a functor:
\begin{align*}
\OA &\longrightarrow \textbf{Group} \\
R  &\longmapsto  U(R) .
\end{align*}
The spectral theory of $p$-adic unitary operator could be explained as \textbf{measure theory on formal group scheme $\G$} later.
\section{Berkovich space and Gelfand representation}
Let $A$ be a $\CC_p$-ultrametric Banach algebra with a norm $\left |- \right |_A$. 
\begin{defn}
We say $\left |-\right |:A \to \R_+ $ is a multiplicative semi-norm, if $\left |-\right |$ satisfy the following properties:
\begin{enumerate}
\item $\left |cx\right |= \left |c\right |_p\left |x \right |,\forall c\in \CC_p,x \in A;$
\item $\left |x+y\right |\le \sup(\left |x \right |,\left |y \right |),\forall x,y \in A;$
\item $\left |xy\right |= \left |x\right |\left |y \right |,\forall x,y \in A.$
\end{enumerate}
\end{defn}

\begin{defn}
The Berkovich space of $A$ is the set of multiplicative semi-norm bounded by $\left |- \right |_A$:
$$
\Berk A=\left \{\rule{0mm}{4.3mm}\left | - \right |:A \to \R_+  \;\middle|\; \left | - \right |\ \text{is multiplicative semi-norm},\left | - \right |\le \left | - \right |_A  \right \}
$$
\end{defn}
\begin{lem}(Monotone Convergence Theorem)
For
$$\left | - \right |_1\le \left | - \right |_2 \le \left | - \right |_3 \le \cdots, \ \left \{\rule{0mm}{4.3mm}\left | - \right |_i   \right \}_{i=1,2,3...}\subseteq \Berk A$$
then we have:
$$
\sup_{i=1,2,3...} \left ( \left | - \right |_i  \right ) \in \Berk A.
$$
\end{lem}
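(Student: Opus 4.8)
The plan is to define the candidate semi-norm $\left|-\right|_\infty\colon A\to\R_+$ by $\left|x\right|_\infty:=\sup_{i}\left|x\right|_i$ and then verify, one by one, the three axioms of a multiplicative semi-norm together with the boundedness $\left|-\right|_\infty\le\left|-\right|_A$. The first observation is that the supremum is actually finite: since each $\left|-\right|_i$ lies in $\Berk A$ we have $\left|x\right|_i\le\left|x\right|_A$ for every $i$, hence $\left|x\right|_\infty\le\left|x\right|_A<\infty$. This simultaneously establishes well-definedness and the required domination by $\left|-\right|_A$, so that $\left|-\right|_\infty$ will automatically belong to $\Berk A$ once the three axioms are checked. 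Note that monotonicity is not needed for any of this.

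Next I would dispose of the two inequality-type axioms, which again require no monotonicity. Homogeneity is immediate: for $c\in\CC_p$ the nonnegative constant $\left|c\right|_p$ factors out of the supremum, so $\left|cx\right|_\infty=\sup_i\left|c\right|_p\left|x\right|_i=\left|c\right|_p\left|x\right|_\infty$ (with both sides $0$ when $c=0$). For the ultrametric inequality, fix $x,y\in A$; for each $i$ we have $\left|x+y\right|_i\le\sup(\left|x\right|_i,\left|y\right|_i)\le\sup(\left|x\right|_\infty,\left|y\right|_\infty)$, and since the right-hand side is a fixed upper bound independent of $i$, taking the supremum over $i$ on the left gives $\left|x+y\right|_\infty\le\sup(\left|x\right|_\infty,\left|y\right|_\infty)$.

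The only real content is multiplicativity, and this is exactly where the hypothesis that the family forms a monotone chain is used. Fix $x,y\in A$ and set $a_i:=\left|x\right|_i$, $b_i:=\left|y\right|_i$; these are nondecreasing sequences of nonnegative reals, bounded above by $\left|x\right|_A$ and $\left|y\right|_A$ respectively, hence convergent to $a:=\left|x\right|_\infty$ and $b:=\left|y\right|_\infty$ by the classical monotone convergence theorem for real sequences. Because each term is nonnegative and both sequences are nondecreasing, the product sequence $a_ib_i$ is itself nondecreasing, so $\sup_i(a_ib_i)=\lim_i a_ib_i$, and by the algebra of limits (both limits being finite) this equals $ab$. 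Since $\left|xy\right|_i=a_ib_i$ by multiplicativity of $\left|-\right|_i$, we conclude $\left|xy\right|_\infty=\sup_i a_ib_i=ab=\left|x\right|_\infty\left|y\right|_\infty$, as wanted.

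I expect this last step to be the main, and essentially only, obstacle: the identity $\sup_i(a_ib_i)=(\sup_i a_i)(\sup_i b_i)$ is false for arbitrary families of nonnegative reals, so the argument must genuinely invoke the chain condition — concretely, the fact that for any $\varepsilon>0$ there is a \emph{single} index $i$ with $a_i>a-\varepsilon$ and $b_i>b-\varepsilon$ at once, which is available precisely because the two sequences increase along a common totally ordered index set. Everything else is routine bookkeeping. It is worth remarking that the same proof goes through verbatim for an increasing net rather than a sequence, which is the form needed to extract maximal elements of $\Berk A$ via Zorn's lemma later on.
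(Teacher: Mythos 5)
Your proof is correct and complete. The paper states this lemma without any proof at all, so there is nothing to compare against; your argument fills that gap. You correctly isolate the one nontrivial point: homogeneity, the ultrametric inequality, and the bound $\left|-\right|_\infty\le\left|-\right|_A$ hold for the supremum of an arbitrary family of elements of $\Berk A$, while multiplicativity genuinely requires the chain hypothesis, since $\sup_i(a_ib_i)=(\sup_i a_i)(\sup_i b_i)$ fails for general families of nonnegative reals but holds for nondecreasing ones because both suprema are approached along a common cofinal set of indices. Your closing remark is also apt and in fact necessary for the paper's next step: the subsequent appeal to Zorn's lemma requires upper bounds for arbitrary chains in $\Berk A$, not just for increasing sequences, and your argument does go through verbatim for an increasing net indexed by any totally ordered set, which is the form actually needed.
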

From Zorn's lemma, there exists some maximum norms. Maximum norms of Berkovich space are correspond to \textbf{generic points} or \textbf{irreducible components} in scheme theory. We can define a canonical spectral semi-norm on $A$.
\begin{defn}
Let $x \in A$, define the \textbf{spectral semi-norm} $\left |-\right |_{sp}$:
$$
\left |x\right |_{sp}:=\sup_{\left |-\right |\in \Berk A} \left |x\right |
$$
\end{defn}
Let $A_{sp}$ be the completion of $A$ over the spectral semi-norm $\left |-\right |_{sp}$, the morphism: $\Gamma:A \to A_{sp}$ is a generalized Gelfand representation map. However, it is not an isometry in general cases. In general, we have:
$$
\left |x\right |_{sp} \le \left |x\right |_A, \forall x \in A.
$$
Let $A$ be a commutative $C^*$-Algebra with unit, 
We say $\left |-\right |:A \to \R_+ $ is a multiplicative semi-norm, if $\left |-\right |$ satisfy the following properties:
\begin{enumerate}
\item $\left |cx\right |= \left |c\right |\left |x \right |,\forall c\in \CC,x \in A;$
\item $\left |x+y\right |\le \left |x \right |+\left |y \right |,\forall x,y \in A;$
\item $\left |xy\right |= \left |x\right |\left |y \right |,\forall x,y \in A.$
\end{enumerate}
the Berkovich space of $A$ is defined as:
$$
\Berk A=\left \{\rule{0mm}{4.3mm}\left | - \right |:A \to \R_+  \;\middle|\; \left | - \right |\ \text{is multiplicative semi-norm},\left | - \right |\le \left | - \right |_A  \right \}
$$
We have:
$$
\begin{tikzcd}
\Berk A &\overset{1:1}{\longleftrightarrow } &\Max A &\overset{1:1}{\longleftrightarrow } &\Hom(A,\CC) \\
\left |-\right |&\overset{1:1}{\longleftrightarrow } & \ker\left |-\right | &\overset{1:1}{\longleftrightarrow } &  A/\ker\left |-\right | 
\end{tikzcd}
$$

where $\Max A$ is the set of all maximum ideal of A, $\Hom(A,\CC)$ is the set of all $\CC$-Banach algebra morphism\footnote{Some papers call $\Hom(A,\CC)$ characters.}. We can define a Gelfand topology on $\Max A$, which is the weakest topology such that all elements $f \in A$ be a continuous function on $\Max A$. Let $C(\Max A)$ be the set of complex valued continuous function on $\Max A$, the Gelfand representation gives a isometry isomorphism: $A \overset{\sim}{\to} C(\Max A)$. We have:
$$
\left |x\right |_{sp} =\sup_{\lambda \in \Max A}\left ( \left |x(\lambda)\right |\right )= \left |x\right |_A, \forall x \in A.
$$
\begin{defn}
Let $U$ be an invertible operator on a Hilbert space $\mathcal{H}$ over $\CC$, we say $U$ is a unitary operator if the following condition holds:
$$
U^{\dagger}=U^{-1}.
$$
\end{defn}
Let $U$ be a usual unitary operator act on a Hilbert space $\mathcal{H}$ over $\CC$, $\End \mathcal{H}$ be the set of bounded operator on $\mathcal{H}$. Let 
$$p(U)= \left \{\rule{0mm}{4.3mm}f(U)\in \End \mathcal{H},f(U)= \sum_{k\in \Z} c_kU^k, c_k\in \CC \right \}$$
be the set of polynomial of $U$ consists of positive degree and negetive degree. $p(U)$ has a norm structure from the operator norm defined on $\End \mathcal{H}$. Let $A$ be the completion of $p(U)$. $A$ is the $C^*$-algebra generated by the $U$, the spectral theory of usual unitary operator $U$ in Gelfand sense is exactly the spectral theory of $A$,we have:
$$A \simeq C(K).$$ 
$K=\Max A$ is the spectrum of $U$, which is a compact subset of the unit cycle $S^1$. We can generalize this statement as the following theorem:
\begin{thm}
(Spectral theory of usual unitary operator)
The complex valued contious function space $C(S^1)$ is a group scheme in the category of unital non-commutative $C^*$-algebra. Let $R$ be a $C^*$-algebra.
Let 
$$
U(R)=\left \{ x\in R,x^{-1}=x^{\dagger}  \right \}
$$
be the set of unitary elements of $R$, $U(R)$ is a group. $\Hom(C(S^1),-)$ defines a functor:
\begin{align*}
\textbf{$C^*$-\textup{Algebra}} &\longrightarrow \textbf{\textup{Group}} \\
R  &\longmapsto U(R).
\end{align*}
Let $K_1,K_2$ be the compact set in $S^1$, then $K_1\cup K_2$ is a compact set in $S^1$, so the compact set in $S^1$ defines a direct system, ordered by relation of inclusion. We have:
\begin{align*}
&1.S^1=\bigcup_{K \subset S^1}K=\underset{K \subset S^1}{\varinjlim}K,\\
&2.C(S^1)=\underset{K \subset S^1}{\varprojlim}C(K),\\
&3.C\left ( K_1\coprod K_2\right ) =C(K_1)\times C(K_2).
\end{align*}
The spectral measure theory of unitary operator $U$ can be viewed as a functorial measure theory.
\end{thm}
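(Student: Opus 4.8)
The plan is to deduce every clause from a single fact: $C(S^1)$ is the universal unital $C^*$-algebra generated by one unitary, i.e. $C(S^1)\cong C^*(\Z)$ with the coordinate function $z$ as generator. First I would make this precise as a representability statement. Let $R$ be a unital $C^*$-algebra and send a unital $*$-homomorphism $\phi\colon C(S^1)\to R$ to $\phi(z)$; since $z$ is unitary, $\phi(z)\in U(R)$. Conversely, given $u\in U(R)$, the elements $u$ and $u^{\dagger}=u^{-1}$ commute, so the $C^*$-subalgebra $C^*(u)\subseteq R$ is commutative and, by the Gelfand representation, is isometrically isomorphic to $C(\sigma(u))$ with $\sigma(u)\subseteq S^1$; restriction along $\sigma(u)\hookrightarrow S^1$ gives a unital $*$-homomorphism $C(S^1)\to C(\sigma(u))\hookrightarrow R$ with $z\mapsto u$, and density of Laurent polynomials forces uniqueness. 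Hence $\phi\mapsto\phi(z)$ is a natural bijection $\Hom(C(S^1),R)\xrightarrow{\sim}U(R)$, which is the promised functor $R\mapsto U(R)$ once one checks, routinely, that $U(R)$ is a group under the product of $R$: $(uv)^{\dagger}=v^{\dagger}u^{\dagger}=(uv)^{-1}$, $1\in U(R)$, $u^{-1}=u^{\dagger}\in U(R)$, all natural in $R$.

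Next I would produce the group-scheme (cogroup) structure on $C(S^1)$. The coproduct in unital $C^*$-algebras is the free product $\ast$, so $\Hom(C(S^1)\ast C(S^1),-)\cong\Hom(C(S^1),-)^{2}$, and the multiplication on $U(-)$ is a natural transformation $\Hom(C(S^1)\ast C(S^1),-)\to\Hom(C(S^1),-)$; by Yoneda it corresponds to a morphism $\Delta\colon C(S^1)\to C(S^1)\ast C(S^1)$, necessarily $\Delta(z)=z_1z_2$ with $z_1,z_2$ the two canonical copies of $z$. Likewise the unit and inversion of $U(-)$ give a counit $z\mapsto 1$ and a coinversion $z\mapsto z^{\dagger}$, and the coassociativity, counit and coinverse identities hold because the group axioms hold in every $U(R)$. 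This is the one step where I expect real care is needed: the comultiplication must land in the free product rather than a tensor product, since in the noncommutative setting the map $U(R)\times U(R)\to U(R)$ does not factor through pairs with commuting ranges; once the cocartesian structure is fixed, the verification is formal.

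For the system of compact pieces: the compact subsets of $S^1$ are directed by inclusion because $K_1\cup K_2$ is compact, so $(1)$ is immediate — singletons are compact and $S^1$ is itself compact, hence $\bigcup_K K=S^1$ is the top element and $\varinjlim_K K=S^1$ in sets (and in topological spaces). Applying the contravariant functor $C(-)$, the restrictions $C(K')\twoheadrightarrow C(K)$ for $K\subseteq K'$ — surjective by Tietze — form an inverse system whose limit is $C(S^1)$; since $S^1$ is the top element this is formal, the content being that a compatible family of continuous functions on all compact $K$ is exactly one continuous function on $S^1$, which gives $(2)$. For $(3)$, if $K_1,K_2$ are disjoint compacts then $K_1\coprod K_2$ is compact with each $K_i$ clopen, so $f\mapsto(f|_{K_1},f|_{K_2})$ is a $*$-isomorphism $C(K_1\coprod K_2)\xrightarrow{\sim}C(K_1)\times C(K_2)$.

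Finally, the functorial-measure interpretation. For a unitary $U$ on $\mathcal H$, representability with $R=\End\mathcal H$ gives $C(S^1)\to\End\mathcal H$; the classical spectral theorem extends this $*$-homomorphism to bounded Borel functions, producing the projection-valued measure $E\mapsto P(E)$ supported on $\sigma(U)$. Restricting to the clopen pieces above, the splitting $C(K_1\coprod K_2)=C(K_1)\times C(K_2)$ translates into orthogonality and additivity $P(K_1\coprod K_2)=P(K_1)\oplus P(K_2)$, so $E\mapsto P(E)\mathcal H$ is an additive functor from the Boolean algebra of Borel subsets of $S^1$ to subrepresentations of $U$, compatible with the coproduct decompositions — this is the asserted functorial measure theory, and it is the template for the $\G$-valued spectral measure in the $p$-adic setting.
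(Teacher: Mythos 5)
The paper states this theorem without a proof environment at all: it is presented as a recollection of the classical Gelfand/spectral picture, with the preceding discussion of $\Max A$, $\Hom(A,\CC)$, and the subsequent informal remarks about $\pi_L$ and the spectral integral serving as the only justification. So there is no proof of record to compare against; your argument is correct and supplies exactly what is missing, along the lines the paper clearly intends. The representability step ($\Hom(C(S^1),R)\cong U(R)$ via $\phi\mapsto\phi(z)$, with the inverse built from the continuous functional calculus $C^*(u)\cong C(\sigma(u))$ and restriction along $\sigma(u)\hookrightarrow S^1$, uniqueness from density of Laurent polynomials) is the standard universal property of $C^*(\Z)$, and the three itemized identities are formal once you note that $S^1$ is the maximal element of the directed system of compact subsets and that disjoint compact sets are clopen in their union. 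The one place where you add genuine content beyond anything the paper even hints at is the cogroup structure: since the theorem insists on the category of non-commutative unital $C^*$-algebras, the comultiplication produced by Yoneda must land in the unital free product $C(S^1)\ast C(S^1)$ rather than a tensor product, because $U(R)\times U(R)\to U(R)$ does not factor through commuting pairs; your identification $z\mapsto z_1z_2$, counit $z\mapsto 1$, antipode $z\mapsto z^{\dagger}$ is the right one, and the cogroup axioms follow from naturality of the group structure on $U(-)$. The closing link between the clopen splitting $C\left(K_1\coprod K_2\right)=C(K_1)\times C(K_2)$ and additivity of the projection-valued measure is exactly the ``functorial measure theory'' the paper gestures at, so nothing in your write-up conflicts with the text; it simply makes the statement precise.
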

The orthogonal projection can be generated from the representation theory of $C(S^1)$, which is a technical problem in functional analysis. Let $L$ be a compact set in $S^1$, which is a measurable set. There exists an orthogonal projection $\pi_L \in \End \mathcal{H}$ such that:
$$
\pi_L U=U\pi_L,
$$
where $\pi_L U$ is a operator whose spectrum lies in $L$. Using some technical method in functional analysis and representation theory\footnote{Riesz representation theorem is needed.}, we can define the spectral integral in classical theory of unitary operator:
$$
I=\int_{S^1}d\pi_{\lambda }  \ \ \   U=\int_{S^1}\lambda d\pi_{\lambda }
$$

In $p$-adic case, the Gelfand representation of the ultrametric Banach algebra may not be isometry. 
\begin{ex}
Let $U=\begin{pmatrix}
 1 & 1\\
 0 & 1
\end{pmatrix} \in \GL_2(\mathcal{O}_p)$, the $\CC_p$-ultrametric Banach algebra generated by $U$ is isomorphic to 
$A=\CC_p\left [X\right ]/(X-1)^2 $, the Gelfand representation of $A$ is isomorphic to $A_{sp}=\CC_p\left [X\right ]/(X-1)\simeq \CC_p$.

\end{ex}
\section{Projection functor}
\begin{notation}
	The symbol $\mathfrak{m}_p^{+}$ is defined as the set:
	$$
	\mathfrak{m}_p^{+}= \left (\mathfrak{m}_p\cup \left \{1^{-}\right \}\right )-\left \{0\right \}.
	$$
	We always use $\varepsilon \in \mathfrak{m}_p^{+}$ to do reduction, where $\varepsilon=1^{-}$ means the reduction over $\overline{\Fp}$.
	Moreover, every ring has an identity.
\end{notation}
In algebraic geometry, localization is a powerful way to study modules. Let $A$ be a commutative ring with unit, $\mathfrak p$ be a prime ideal, $A_{\mathfrak p}$ be the localization at $\mathfrak p$. $M$ be an $A$-module, we have:
$$
M=0 \iff A_{\mathfrak p}\underset{A}{\otimes} M= M_{\mathfrak p}=0, \forall \mathfrak p \in \Spec A.
$$
One can define the localization functor:$\ A_{\mathfrak p}\underset{A}{\otimes}(-)$. $A_{\mathfrak p}\underset{A}{\otimes}(-)$ is an exact functor from category $A-mod$ to $A-mod$.

When $A$ is a ultrametric commutative Banach algebra, we cannot define the well-behaved localization functor for $A$-Banach modules. However, it is possible to define the \textbf{projection functor}.
\begin{defn}
Let $A$ be a $\mathcal{O}_p$-algebra with respect to a norm $\left |-\right |_A:A \to \R_+$, we say $A$ is a \textbf{ultrametric $\mathcal{O}_p$-Banach algebra}, if the following conditions hold:
\begin{align*}
&1.\left | cx\right |_A \le \left | c \right |_p\left | x \right |_A, \forall c \in \mathcal{O}_p,x\in A; \\
&2.\left | xy \right |_A \le \left | x \right |_A \left |  y\right |_A ,\forall x,y \in A; \\
&3.\left |x+y \right |_A \le \sup( \left |x\right |_A, \left |y\right |_A ), \forall x,y\in A;\\
&4.\left(A,\left |-\right |_A\right) \textup{is complete}.
\end{align*}
We say $A$ is a \textbf{bounded ultrametric $\mathcal{O}_p$-Banach algebra} if the following condition hold:
$$
\forall x \in A, \left | x \right |_A \le 1.
$$
\end{defn}

\begin{defn}
    Let $A$ be a bounded ultrametric $\mathcal{O}_p$-Banach algebra. Suppose  $\forall \varepsilon \in \mathfrak{m}_p,\ \varepsilon  A$ is a open-closed ideal in $A$. Let's define:
$$I_{\varepsilon}=\begin{cases}
	\varepsilon  A  & \ \ \varepsilon  \in \mathfrak{m}_p \\
	\bigcup_{\varepsilon \in \mathfrak{m}_p}\varepsilon  A    & \ \  \varepsilon=1^{-}.
\end{cases}$$
    We have:
$$
A= \varprojlim_{\varepsilon \in \mathfrak{m}_p^{+}} A/I_{\varepsilon}:=\varprojlim_{\varepsilon \in \mathfrak{m}_p^{+}} A/\varepsilon A.
$$
	We say $A$ is a \textbf{strict ultrametric $\mathcal{O}_p$-Banach algebra} if the condition above holds.
\end{defn}
\begin{ex}
	$\G$ is a strict ultrametric $\mathcal{O}_p$-Banach algebra. 
\end{ex}
\begin{ex}
Let $A$ be a ultrametric $\CC_p$-Banach algebra, 
$$
\mathcal{O}_A:=\left \{ x\in A,\left |x \right |_A\le 1  \right \}  
$$
is a strict ultrametric $\mathcal{O}_p$-Banach algebra. $\forall \varepsilon \in \mathfrak{m}_p^{+}, \mathcal{O}_A/\varepsilon\mathcal{O}_A$ is a strict ultrametric $\mathcal{O}_p$-Banach algebra. 
\end{ex}

\begin{defn}
Let $M$ be a $\mathcal{O}_p$-module with respect to a norm $\left |-\right |_M:M \to \R_+$, we say $M$ is a \textbf{ultrametric $\mathcal{O}_p$-Banach module}, if the following conditions hold:
\begin{align*}
&1.\left | cx\right |_M \le \left | c \right |_p\left | x \right |_M, \forall c \in \mathcal{O}_p, x\in M; \\
&2.\left |x+y \right |_M \le \sup( \left |x\right |_M, \left |y\right |_M ), \forall x,y\in M; \\
&3.\left(M,\left |-\right |_M\right) \textup{is complete}.
\end{align*}
We say $M$ is a \textbf{bounded ultrametric $\mathcal{O}_p$-Banach module} if the following condition hold:
$$
\forall x \in M, \left | x \right |_M \le 1.
$$
\end{defn}
\begin{defn}
	Let $M$ be a bounded ultrametric $\mathcal{O}_p$-Banach module. Suppose $\forall \varepsilon \in \mathfrak{m}_p,\ \varepsilon  M$ is a open-closed submodule in $A$. Let's define:
	$$\varepsilon M=\begin{cases}
		\varepsilon  M  & \ \ \varepsilon  \in \mathfrak{m}_p \\
		\bigcup_{\varepsilon \in \mathfrak{m}_p}\varepsilon M    & \ \  \varepsilon=1^{-}.
	\end{cases}$$
	We have:
	$$
	M= \varprojlim_{\varepsilon \in \mathfrak{m}_p^{+}} M/\varepsilon M
	$$
	We say $M$ is a \textbf{strict ultrametric $\mathcal{O}_p$-Banach module} if the condition above holds.
\end{defn}

\begin{defn}
Let $\left ( A,\left |-\right |_A  \right )$ be a ultrametric $\mathcal{O}_p$-Banach algebra, $\left ( M,\left |-\right |_M  \right )$ be a ultrametric $\mathcal{O}_p$-Banach module, we say $M$ is a ultrametric $A$-Banach module, if the following condition holds:
\begin{align*}
&1. M \textup{ is an }A-mod; \\
&2.\left |am\right |_M\le \left |a\right |_A\left |m\right |_M, \forall a \in A,m \in M;
\end{align*}
\end{defn}
The morphism between bounded ultrametric $A$-Banach module is defined by the contraction morphism of $A$-mod:
$$
\Hom_{\AB}(M,N)=\left \{ \phi\in \Hom_{A-\textup{mod}}(M,N), \left |\phi(x)\right |_{N}\le \left |x\right |_{M}     \right \}. 
$$
The morphism between bounded ultrametric $\mathcal{O}_p$-Banach algebra is defined by the contraction morphism of $\mathcal{O}_p$-algebra:
$$
\Hom_{\OA}(A,R)=\left \{ \phi\in \Hom_{\mathcal{O}_p\textup{-algebra}}(A,R), \left |\phi(x)\right |_{R}\le \left |x\right |_{A}     \right \}. 
$$
Let $\AB$ donate the category of bounded ultrametric $A$-Banach module and contraction morphism. Let $\OA$ donate the category of bounded ultrametric $\mathcal{O}_p$-Banach algebra and contraction morphism. Let $\AB_S$ donate the category of strict ultrametric $A$-Banach module and contraction morphism. Let $\OA_S$ donate the category of strict ultrametric $\mathcal{O}_p$-Banach algebra and contraction morphism.
\begin{rem}
The closed $A$-submodule $N$ of bounded ultrametric $A$-Banach module $M$ is bounded ultrametric $A$-Banach module with respect to the norm:
$$
\left | x \right |_{N}=\left | x \right |_{M}
$$
\end{rem}
\begin{rem}
Suppose $N$ is a open-closed submodule of strict ultrametric $A$-Banach module $M$, the quotient module $M/N$ is strict ultrametric $A$-Banach module with respect to the quotient norm:
$$
\left | x \right |_{M/N} =\sup_{y\in N}(\left |x+y\right |_{M})  
$$
\end{rem}
\begin{defn}
Let $A$ be a ultrametric $\mathcal{O}_p$-Banach algebra, $A$ has a $p$-adic topology given by $\left |-\right |_A $. Let $I$ be a ideal of $A$. We say $I$ is \textbf{open ideal} if $I$ is open set.
\end{defn}
\begin{rem}
$I$ is open ideal if and only if: 
$$\exists \varepsilon \in \mathcal{O}_p, \varepsilon \ne 0, w(\varepsilon)  \in I,$$
where $w$ is the unit morphism: 
$$
w:\mathcal{O}_p \to A
$$
Let $I\le A$ be a open ideal, it can be showed that $I$ is closed in $A$.
\end{rem}

Let's define the projection functor.
\begin{defn}
Let $A$ be a strict ultrametric $\mathcal{O}_p$-Banach algebra, $I$ be a open ideal of $A$. The \textbf{left projection functor} $\Pi_{I}:\AB \to \AB$ is defined as:
$$
\Pi_{I}(M)=\Hom_{\AB}(A/I,M)=\left \{ x\in M  \;\middle|\; ax=0,\forall a\in I \right \}.
$$
The \textbf{right projection functor} $\pi_{I}:\AB_{S} \to \AB_{S}$ is defined as:
$$
\pi_{I}(M)=A/I\underset{A}{\otimes} M=M/IM.
$$
Since $I$ is open ideal, it is easy to show that $IM$ is open-closed submodule of $M$. Moreover, \textbf{it is convenient to permit $I=(0)$ or $(1)$}.
\end{defn}
\begin{rem}
The quotient algebra $A/I$ can be viewed as the $p$-adic counterpart of complex valued function $C(K)$ over compact set $K$. $\Pi_{I}(M)$
 gives a "categorical submodule of $M$", $\pi_{I}(M)$
 gives a "categorical quotient module of $M$". 
 Let $\mathcal{H}$ be a Hilbert space, the close subspace $X$ of $\mathcal{H}$ has a orthogonal projection theorem:
 $$
 \mathcal{H}=X \hat{\oplus}  X^{\perp}, \mathcal{H}/X \simeq X^{\perp}.
 $$ 
\end{rem}

\begin{prop}
Let $I,J$ be open ideal of $A$, $M\in \AB_S$. the projection functor has the following pullback and pushout diagram:
$$
\begin{tikzcd}
 \Pi_{I+J}(M) \arrow[hookrightarrow,r] \arrow[hookrightarrow,d] & \Pi_{J}(M) \arrow[hookrightarrow,d]  &	\pi_{I\cap J}(M) \arrow[two heads,r] \arrow[two heads,d] & \pi_{J}(M) \arrow[two heads,d] \\
 \Pi_{I}(M) \arrow[hookrightarrow,r] & \Pi_{I\cap J}(M) &\pi_{I}(M) \arrow[two heads,r] & \pi_{I+J}(M)
\end{tikzcd}
$$

where $I \cap J$ corresponds to the abstract union of "measurable set",
$I+J$ corresponds to the abstract intersection of "measurable set". So the direct limit and inverse limit of $\Pi_{I}(-)$ and $\pi_{I}(-)$ could be defined.
Let $\Omega$ be a set of some open ideal of $A$ such that:
$$I,J \in \Omega \implies I\cap J,I+J\  \textup{in} \ \Omega.$$ 
There exists a canonical morphism:
$$
\begin{tikzcd}
&\underset{I\in \Omega}{\varinjlim}\Pi_{I}(M) \arrow[r] &M \arrow[r] & \underset{I\in \Omega}{\varprojlim}\pi_{I}(M). 
\end{tikzcd}
$$
\end{prop}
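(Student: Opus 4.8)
The plan is to reduce both commutative squares to two elementary algebraic identities and then to attend separately to the ultrametric–Banach bookkeeping. The identities are: (i) $\Pi_{I+J}(M)=\Pi_I(M)\cap\Pi_J(M)$ inside $M$, since $x$ is annihilated by $I+J$ iff it is annihilated by $I$ and by $J$ (the nontrivial direction uses $(a+b)x=ax+bx$ and the fact that every element of $I+J$ is such a sum); and (ii) $(I+J)M=IM+JM$ together with $(I\cap J)M\subseteq IM\cap JM$, both immediate. I would also record the monotonicity $I'\subseteq I\Rightarrow\Pi_I(M)\subseteq\Pi_{I'}(M)$ and $I'\subseteq I\Rightarrow I'M\subseteq IM$, hence a surjection $\pi_{I'}(M)\twoheadrightarrow\pi_I(M)$. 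Applying these with $I'=I\cap J$ and with $I'=I,J\subseteq I+J$ produces exactly the eight arrows appearing in the two diagrams, with the $\Pi$-arrows inclusions and the $\pi$-arrows the canonical quotient surjections. The identification of $I\cap J$ with an abstract union and $I+J$ with an abstract intersection of ``measurable sets'' is only the analogy with $C(K)$ (ideals of functions vanishing on closed subsets), so it needs no separate argument.

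Next I would invoke the relevant universal properties. In $\AB$ the fibre product of two subobjects $N_1,N_2\hookrightarrow N$ is $N_1\cap N_2$ with the restricted norm; taking $N=\Pi_{I\cap J}(M)$, $N_1=\Pi_I(M)$, $N_2=\Pi_J(M)$, identity (i) shows the fibre product is $\Pi_{I+J}(M)$, which establishes the pullback square. Here one must note $\Pi_I(M)$ is closed in $M$ — it is the intersection of the kernels of the continuous maps $x\mapsto ax$, $a\in I$ — so everything stays in the category. Dually, in modules the pushout of $M/N_1\leftarrow M/N_0\rightarrow M/N_2$ (with $N_0\subseteq N_1,N_2$ and the canonical maps) is $M/(N_1+N_2)$; with $N_0=(I\cap J)M$, $N_1=IM$, $N_2=JM$, identity (ii) gives the pushout $M/(I+J)M=\pi_{I+J}(M)$, establishing the pushout square. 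A priori the Banach pushout is the completion of this algebraic quotient, but $(I+J)M$ contains $\varepsilon M$ for the $\varepsilon$ witnessing openness of $I+J$ (using the Remark characterising open ideals), hence $(I+J)M$ is open–closed, and by the earlier Remark $M/(I+J)M$ is already a strict ultrametric $A$-Banach module; no completion is needed. That all structure maps are contractions follows because passing to a submodule or to a quotient norm only decreases norms.

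For the (co)limits I would order $\Omega$ by reverse inclusion, $I\preceq J:\iff J\subseteq I$. Then $\{\Pi_I(M)\}_{I\in\Omega}$ with the inclusion maps is a direct system, and it is filtered because $I\cap J\in\Omega$ dominates both $I$ and $J$ (so that $\Pi_I(M),\Pi_J(M)\hookrightarrow\Pi_{I\cap J}(M)$); dually $\{\pi_I(M)\}_{I\in\Omega}$ with the surjections is a cofiltered inverse system, $I\cap J$ again furnishing a common lower bound. Hence $\varinjlim_{I\in\Omega}\Pi_I(M)$ and $\varprojlim_{I\in\Omega}\pi_I(M)$ exist in $\AB$ — the colimit as the Banach completion of $\bigcup_I\Pi_I(M)$, the limit as the bounded inverse limit, closed in the product and therefore complete. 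Since $M$ receives the compatible family of inclusions $\Pi_I(M)\hookrightarrow M$ and admits the compatible family of projections $M\twoheadrightarrow M/IM=\pi_I(M)$, the universal properties yield canonical contractions $\varinjlim_I\Pi_I(M)\to M$ and $M\to\varprojlim_I\pi_I(M)$, and the asserted morphism is their composite.

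The hard part will be none of the algebra but the topological compatibilities: checking that $IM$ and $\Pi_I(M)$ are genuinely open–closed so the quotients, intersections and (co)limits remain in $\AB$ resp. $\AB_S$ and the preceding Remarks apply; checking that forming Banach completions in the filtered colimit does not disturb the canonical maps; and verifying the contraction estimates for every arrow in the two squares and the two limit cones. I would also emphasise where the hypothesis ``$I,J\in\Omega\Rightarrow I\cap J,I+J\in\Omega$'' is used: stability under $\cap$ is exactly what makes $I\mapsto\Pi_I(M)$ a directed (indeed filtered) system and $I\mapsto\pi_I(M)$ a cofiltered one, while stability under $+$ keeps $\Omega$ closed under the operation occurring in the pullback/pushout corners, so that the diagrams remain internal to the indexing poset.
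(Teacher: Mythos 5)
Your proposal is correct, and the paper in fact offers no proof of this proposition at all, treating it as routine; your reduction to the identities $\Pi_{I+J}(M)=\Pi_I(M)\cap\Pi_J(M)$ and $(I+J)M=IM+JM$, plus the openness of $IM$ guaranteeing the quotients stay in $\AB_S$, is exactly the verification the paper leaves implicit. No gaps.
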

\begin{rem}
	In general, $\underset{I\in \Omega}{\varinjlim}\Pi_{I}(M)$ can be viewed as "\textbf{categorical interior} of $M$", $\underset{I\in \Omega}{\varprojlim}\pi_{I}(M)$ can be viewed as "\textbf{categorical closure} of $M$".
	
	The phenomenon is similar as the categorical definition of interior and closure of topological space $X$:
$$
  \begin{tikzcd}
		&\underset{U \ \textup{is open}}{\varinjlim}U=\underset{U\subset X}{\bigcup}U \arrow[r] &X \arrow[r] & \underset{X\subset V}{\bigcap}V=\underset{V \ \textup{is closed}}{\varprojlim}V. 
  \end{tikzcd}
$$
	In the case of Hilbert space, the orthogonal projection theorem tells us the closed subspace of Hilbert space is both "categorical open" and "categorical closed". So the category of Hilbert space admits excellent measure-theoretic structure.
\end{rem}
\begin{prop}
In the functor category $\Fct(\AB,\AB)$ and $\Fct(\AB_S,\AB_S)$ we have:
    \begin{align*}
	&0.\Pi_{(1)}=0_{\AB},\pi_{(1)}=0_{\AB_S};\\
	&1.\Pi_{(0)}=Id_{\AB},\pi_{(0)}=Id_{\AB_S};\\
	&2.\Pi_{I}^2=\Pi_{I},\pi_{I}^2=\pi_{I}; \\
	&3.\Pi_{I}\circ \Pi_{J}=\Pi_{I+J},\pi_{I}\circ \pi_{J}=\pi_{I+J}; \\
	&4.\Pi_{I}\circ \Pi_{J}=0 \iff I+J=(1);\\
	&5.\pi_{I}\circ \pi_{J}=0 \iff I+J=(1).
	\end{align*}
\end{prop}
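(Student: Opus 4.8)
The plan is to verify each of the six identities directly from the definitions of $\Pi_I$ and $\pi_I$, leveraging the elementary ring-theoretic facts about open ideals $I,J$ of $A$. I would first dispose of items $0$ and $1$: since $A/(1)=0$, the module $\Hom_{\AB}(0,M)=0$ and $0\otimes_A M=0$, giving $\Pi_{(1)}=0_{\AB}$ and $\pi_{(1)}=0_{\AB_S}$; since $A/(0)=A$, we get $\Hom_{\AB}(A,M)\cong M$ and $A\otimes_A M\cong M$, giving the identity functors. Here I would note that the convention ``it is convenient to permit $I=(0)$ or $(1)$'' from the definition is exactly what makes these endpoint cases legitimate.

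Next I would establish the composition law, item $3$, from which item $2$ follows by taking $I=J$ (using $I+I=I$) and item $4$ follows by specializing to $I+J=(1)$, where $\Pi_{(1)}=0$. For the right projection functor, $\pi_I\circ\pi_J(M)=\pi_I(M/JM)=(M/JM)/I(M/JM)=M/(IM+JM)=M/(I+J)M=\pi_{I+J}(M)$; the key algebraic identity is $IM+JM=(I+J)M$, which is immediate, together with the fact that passing to the quotient by an open-closed submodule and then by another open-closed submodule is the same as quotienting by their sum, and all intermediate objects stay in $\AB_S$ by the remark on quotients of strict modules. For the left projection functor, $\Pi_I\circ\Pi_J(M)=\{x\in M: ax=0\ \forall a\in I\ \text{and}\ bx=0\ \forall b\in J\}=\{x\in M: cx=0\ \forall c\in I+J\}=\Pi_{I+J}(M)$, since the annihilator conditions for $I$ and for $J$ together are equivalent to the annihilator condition for the ideal they generate, namely $I+J$. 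Items $2$, $4$, $5$ are then formal consequences: $\pi_I^2=\pi_{I+I}=\pi_I$ and $\Pi_I^2=\Pi_{I+I}=\Pi_I$, while $\pi_I\circ\pi_J=0\iff\pi_{I+J}=0\iff A/(I+J)=0\iff I+J=(1)$, and similarly for $\Pi$.

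The main obstacle, though minor, is the bookkeeping in the $\pi_I$ case: one must check that the natural map $(M/JM)/I(M/JM)\to M/(I+J)M$ is an isometric isomorphism in $\AB_S$, i.e. that the quotient norms match and that $(I+J)M$ is indeed open-closed in $M$ so that the target lives in $\AB_S$. I would handle this by observing that openness of $I$ and $J$ gives $w(\varepsilon)\in I$ and $w(\delta)\in J$ for nonzero $\varepsilon,\delta\in\mathcal{O}_p$ (Remark on open ideals), hence $w(\varepsilon)\in I+J$, so $I+J$ is open and therefore closed, and then $(I+J)M$ is open-closed as noted in the definition of the projection functors; the norm identification is the standard fact that iterated quotient norms for ultrametric Banach modules agree with the single quotient norm. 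A symmetric remark disposes of the $\Pi_I$ side: the intersection of two closed submodules is closed, so $\Pi_{I+J}(M)$ is a closed submodule of $M$ and lies in $\AB$ with the restricted norm, per the earlier remark on closed submodules. Once these compatibilities are in place, all six items follow, and I would present the argument in the order $0,1$, then $3$ (for both functors), then $2,4,5$ as corollaries.
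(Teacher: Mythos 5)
Your proposal is correct, and it proceeds exactly as one would expect: the paper states this proposition without any proof (treating it as immediate from the definitions of $\Pi_I$ and $\pi_I$), and your direct verification --- the endpoint cases, the composition law $\Pi_I\circ\Pi_J=\Pi_{I+J}$ and $(M/JM)/I(M/JM)\cong M/(I+J)M$ via $IM+JM=(I+J)M$, and the deduction of items $2$, $4$, $5$ by specializing $I=J$ or $I+J=(1)$ and testing $\pi_K$ on $A$ and $\Pi_K$ on $A/K$ --- is the natural argument, with the added care about openness of $I+J$ and the quotient norms that the paper leaves implicit.
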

\begin{prop}
Let $\Comp A=\left \{ I \subset A \;\middle|\; I \textup{ is open ideal} \right \}$ be the set of open ideal of $A$, Let $\Omega$ be any index set. we have:
   \begin{align*}
	&1.I,J\in \Comp A\Rightarrow I\cap J \in \Comp A;\\
	&2.\left \{ I_k \right \}_{k \in \Omega}\subset \Comp A \Rightarrow \sum_{k \in \Omega} I_k \in \Comp A.
   \end{align*}
$\Comp A$ has a lattice structure with $\cap$ and $\sum$.
\end{prop}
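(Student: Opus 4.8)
The plan is to reduce both statements to the criterion recorded in the Remark just above the Proposition: an ideal $I \le A$ is open if and only if it contains $w(\varepsilon)$ for some nonzero $\varepsilon \in \mathcal{O}_p$, where $w : \mathcal{O}_p \to A$ is the unit map. Granting this criterion, part~1 admits two quick arguments. Topologically, $I \cap J$ is an intersection of two open sets, hence open, and simultaneously an intersection of two ideals, hence an ideal. Arithmetically, pick nonzero $\varepsilon_1,\varepsilon_2 \in \mathcal{O}_p$ with $w(\varepsilon_1)\in I$, $w(\varepsilon_2)\in J$; since $\mathcal{O}_p$ is an integral domain the product $\varepsilon=\varepsilon_1\varepsilon_2$ is nonzero, and $w(\varepsilon)=w(\varepsilon_1)w(\varepsilon_2)$ lies in $I$ and in $J$ because each absorbs products, so $w(\varepsilon)\in I\cap J$ and the criterion gives openness.

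For part~2, first observe that $\sum_{k\in\Omega} I_k$ is an ideal of $A$ by the usual algebra of sums of ideals. If $\Omega\neq\emptyset$, fix $k_0\in\Omega$; then $I_{k_0}\subseteq\sum_{k\in\Omega}I_k$, and since $I_{k_0}$ is open it contains some $w(\varepsilon)$ with $\varepsilon\neq0$, whence $w(\varepsilon)\in\sum_{k\in\Omega}I_k$ and this ideal is open by the criterion. The degenerate case $\Omega=\emptyset$ yields the zero ideal, which is admitted to $\Comp A$ under the stated convention permitting $I=(0)$; dually $(1)=A$ is always open, so $\Comp A$ has a bottom and a top element.

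For the lattice assertion, order $\Comp A$ by inclusion. Given $I,J\in\Comp A$, the ideal $I\cap J$ is contained in both and contains every ideal contained in both, so it is the meet; by part~1 it lies in $\Comp A$. Likewise $I+J$ contains both and is contained in every ideal containing both, so it is the join; by part~2 applied to $\Omega=\{1,2\}$ it lies in $\Comp A$. Hence $(\Comp A,\subseteq)$ is a sublattice of the full lattice of ideals of $A$, with meet $\cap$ and join $+$. Because part~2 furnishes joins of arbitrary families $\{I_k\}_{k\in\Omega}$, the poset is in fact a complete lattice; I would note in passing that the infinite meet of $\{I_k\}$ must then be computed as $\sum\{\,J\in\Comp A: J\subseteq\bigcap_k I_k\,\}$ rather than as the raw intersection $\bigcap_k I_k$, since an infinite intersection of open sets need not be open.

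With the Remark in hand the argument is essentially formal, so the only place that really uses structure — and thus the point I would treat most carefully in a self-contained write-up — is the nontrivial (``if'') direction of that Remark: an ideal containing $w(\varepsilon)$ with $\varepsilon\neq0$ is open. This uses strictness of $A$, namely that $\varepsilon A$ is an open(-closed) additive subgroup; then $w(\varepsilon)\in I$ forces $\varepsilon A\subseteq I$, and $I=\bigcup_{x\in I}(x+\varepsilon A)$ exhibits $I$ as a union of open sets. Apart from this, the only care needed is bookkeeping around the empty index set and the $(0)$, $(1)$ conventions.
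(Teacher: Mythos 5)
Your proof is correct, and since the paper states this Proposition without proof, your argument via the Remark's criterion ($I$ is open iff $w(\varepsilon)\in I$ for some nonzero $\varepsilon\in\mathcal{O}_p$) is exactly the intended route: finite intersections and arbitrary sums of ideals containing such a $w(\varepsilon)$ again contain one, and you rightly isolate the only substantive step, namely that $w(\varepsilon)\in I$ forces $\varepsilon A\subseteq I$ and hence openness of $I$ as a union of translates of the open subgroup $\varepsilon A$ (using strictness of $A$). Your caveats are also well taken: the case $\Omega=\emptyset$ produces $(0)$, which is not open in general and is only admitted by the paper's stated convention, and the infinite meet in the resulting complete lattice cannot be taken to be the raw intersection $\bigcap_k I_k$.
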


Let $R$ be a object in the category $\OA$, not necessary commutative. Every morphism: $f\in \Hom_{\OA}\left(A,R\right)$ induces a map:
\begin{align*}
	f^*:\Comp R &\to \Comp A\\
    V & \mapsto f^{-1}\left(V\right). 
\end{align*}
There exists a \textbf{relative topology} on $\Hom_{\OA}\left(A,R\right)$, which is defined by the weakest topology makes every element in $A$ be a continuous function on $\Hom_{\OA}\left(A,R\right)$. Moreover, there exists a categorical ismorphsim:\footnote{It comes from the definition of continuous morphism.}
$$
\Hom_{\OA}\left(A,R\right)\simeq \varprojlim_{\varepsilon^{*} \in \mathfrak{m}_p^{+}}\varinjlim_{\varepsilon \in \mathfrak{m}_p^{+}}\Hom\left(A/\varepsilon A,R/\varepsilon^{*} R\right).
$$

Now let's study the spectral measure theory on the formal group scheme $$
\G: =\left \{ f(T)=\sum_{n=-\infty}^{\infty}a_nT^n,a_n \in \mathcal{O}_p,a_n \to 0  \right \}= \mathcal{O}_p\left \langle t,t^{-1} \right \rangle  
$$
\begin{defn}
Let
$$
u(\G)=\left \{ f\in \mathcal{O}_p\left [ t,t^{-1}\right ] \;\middle|\; f(t)=c\prod_{\lambda\in \mathcal{O}_p^{\times}}(t-\lambda)^{n_{\lambda}},c \in \mathcal{O}_p^{\times},n_{\lambda} \in \Z \right \}.
$$
we call $u(\G)$ \textbf{unit polynomial} of $\G$. It's obvious to see that:
$$
f,g\in u(\G) \implies fg \in u(\G)
$$
\end{defn}
\begin{defn}
Let $\varepsilon  \in \mathfrak{m}_p^{+}, f \in u(\G)$, the \textbf{unit ideal} is defined as:
$$
I_{\varepsilon ,f}=\begin{cases}
	\left (\varepsilon,f\right )  & \ \ \varepsilon  \in \mathfrak{m}_p \\
	\bigcup_{\varepsilon \in \mathfrak{m}_p}I_{\varepsilon ,f}    & \ \  \varepsilon=1^{-}
\end{cases}
$$
\end{defn}
In geometry, $I_{\varepsilon ,f}$ represents some balls whose radius are $\epsilon$ in $\mathfrak{m}_p^{+}$. We want to use $I_{\varepsilon ,f}$ to parameterize the action of formal group scheme. Let $\Rt f$ donate the set of all root(including multiplicity) of $f$, $d(\Rt f,\Rt g)$ be the distance of set $\Rt f,\Rt g$, $res(f,g)$ be the resultant of $f,g$.
\begin{thm}
Let $p \ge 3$, the following conditions are equivalent:
	\begin{align*}
	&1.I_{\varepsilon,f}+I_{\varepsilon,g}=(1);\\
	&2.\Pi_{I_{\varepsilon,f}}\circ\Pi_{I_{\varepsilon,g}}=0;\\
	&3.\pi_{I_{\varepsilon,f}}\circ\pi_{I_{\varepsilon,g}}=0;\\
	&4.d(\Rt f,\Rt g)=1;\\
	&5.res(f,g)\in \mathcal{O}_p^{\times};\\
	&6.\G/I_{\varepsilon,fg}\simeq \G/I_{\varepsilon,f}\times \G/I_{\varepsilon,g}.
    \end{align*}
\end{thm}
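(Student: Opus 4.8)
The plan is to treat condition~(1) as the hub. First I will prove $(1)\Leftrightarrow(4)\Leftrightarrow(5)$ by an elementary resultant computation combined with reduction modulo $\mathfrak m_p$; then $(1)\Leftrightarrow(2)$ and $(1)\Leftrightarrow(3)$ are immediate from the earlier Proposition asserting $\Pi_{I}\circ\Pi_{J}=0\Leftrightarrow I+J=(1)$ and $\pi_{I}\circ\pi_{J}=0\Leftrightarrow I+J=(1)$; and $(1)\Leftrightarrow(6)$ will follow from the Chinese Remainder Theorem. As a preliminary normalization, since $t$ is a unit of $\G$, multiplying $f$ or $g$ by a power of $t$ changes neither the ideals $(f),(g)\subseteq\G$ nor the root sets, so I may assume $f,g\in\mathcal O_p[t]$ are genuine polynomials with unit leading coefficient and unit constant term; all their roots then lie in $\mathcal O_p^{\times}$, in accordance with the definition of $u(\G)$. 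Write $\overline{(-)}$ for reduction modulo $\mathfrak m_p$, so $\G\twoheadrightarrow\G/\mathfrak m_p\G=\overline{\Fp}[t,t^{-1}]$, a principal ideal domain; for $\varepsilon\in\mathfrak m_p$ one has $I_{\varepsilon,f}=(\varepsilon,f)$ and $I_{1^{-},f}=\mathfrak m_p\G+f\G$, so in all cases $\G/I_{\varepsilon,f}$ surjects onto $\overline{\Fp}[t,t^{-1}]/(\bar f)$, and since $\sqrt{(\varepsilon)}=\mathfrak m_p\G$ the space $V(I_{\varepsilon,f})$ is the finite discrete set $S_f:=\{\bar\alpha\in\overline{\Fp}^{\times}: \bar f(\bar\alpha)=0\}$.

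For $(4)\Leftrightarrow(5)$ I use the factorization $\operatorname{res}(f,g)=\mathrm{lc}(f)^{\deg g}\mathrm{lc}(g)^{\deg f}\prod_{\alpha,\beta}(\alpha-\beta)$, the product over roots of $f$ and $g$ with multiplicity. Since $\mathrm{lc}(f),\mathrm{lc}(g)\in\mathcal O_p^{\times}$ and every factor has $|\alpha-\beta|_p\le 1$ (roots are units), $\operatorname{res}(f,g)\in\mathcal O_p^{\times}$ iff every $\alpha-\beta$ is a unit iff $\min_{\alpha,\beta}|\alpha-\beta|_p=1$, which is exactly $d(\Rt f,\Rt g)=1$. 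For $(5)\Rightarrow(1)$ I invoke the Sylvester identity $uf+vg=\operatorname{res}(f,g)$ with $u,v\in\mathcal O_p[t]$: if $\operatorname{res}(f,g)$ is a unit then $1\in(f,g)$ already in $\G$, so $I_{\varepsilon,f}+I_{\varepsilon,g}\supseteq(f)+(g)=\G$ for every $\varepsilon\in\mathfrak m_p^{+}$. For $(1)\Rightarrow(4)$: reducing $(1)$ modulo $\mathfrak m_p$ gives $(\bar f,\bar g)=(1)$ in $\overline{\Fp}[t,t^{-1}]$, so $\bar f,\bar g$ have no common zero in $\overline{\Fp}^{\times}$; hence no root $\alpha$ of $f$ and $\beta$ of $g$ satisfy $\bar\alpha=\bar\beta$, i.e. $|\alpha-\beta|_p=1$ for all such pairs, i.e. $d(\Rt f,\Rt g)=1$. (This also shows $(1)$ for one $\varepsilon\in\mathfrak m_p^{+}$ is equivalent to $(1)$ for all of them: a B\'ezout relation over $\overline{\Fp}[t,t^{-1}]$ lifts to $uf+vg=1+h$ in $\G$ with $|h|_p<1$, and $1+h$ is then invertible in $\G$ by a geometric series.)

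The equivalences $(1)\Leftrightarrow(2)$ and $(1)\Leftrightarrow(3)$ are now immediate: $\G$ is a strict ultrametric $\mathcal O_p$-Banach algebra, $I_{\varepsilon,f},I_{\varepsilon,g}$ are open ideals (each contains a nonzero element of $\mathcal O_p$), and applying the cited Proposition with $I=I_{\varepsilon,f}$, $J=I_{\varepsilon,g}$ gives $\Pi_{I_{\varepsilon,f}}\circ\Pi_{I_{\varepsilon,g}}=0\Leftrightarrow I_{\varepsilon,f}+I_{\varepsilon,g}=(1)\Leftrightarrow\pi_{I_{\varepsilon,f}}\circ\pi_{I_{\varepsilon,g}}=0$. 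For $(1)\Leftrightarrow(6)$: since $fg\in(f)\cap(g)$ we have $I_{\varepsilon,fg}\subseteq I_{\varepsilon,f}\cap I_{\varepsilon,g}$, so there is a canonical ring map $\G/I_{\varepsilon,fg}\to\G/I_{\varepsilon,f}\times\G/I_{\varepsilon,g}$. Assuming $(1)$, the Chinese Remainder Theorem makes it surjective, and $I_{\varepsilon,f}\cap I_{\varepsilon,g}=I_{\varepsilon,f}I_{\varepsilon,g}$; multiplying a B\'ezout relation $1=\alpha\varepsilon+\beta f+\gamma g$ by $\varepsilon$ shows $\varepsilon\in I_{\varepsilon,f}I_{\varepsilon,g}$, whence $I_{\varepsilon,f}I_{\varepsilon,g}=I_{\varepsilon,fg}$ and the map is injective, proving $(6)$. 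Conversely, $V(I_{\varepsilon,fg})=V(I_{\varepsilon,f})\cup V(I_{\varepsilon,g})$ always (a prime contains $fg$ iff it contains $f$ or $g$); a ring isomorphism $\G/I_{\varepsilon,fg}\cong\G/I_{\varepsilon,f}\times\G/I_{\varepsilon,g}$ forces the number of connected components of the left spectrum to equal the sum of those on the right, and by the preliminary remark these counts are $\#(S_f\cup S_g)$, $\#S_f$, $\#S_g$, so $S_f\cap S_g=\varnothing$, i.e. $d(\Rt f,\Rt g)=1$, i.e. $(1)$.

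The step I expect to be the main obstacle is the bookkeeping around the two kinds of reduction parameter: making the colimit $I_{1^{-},f}=\bigcup_{\varepsilon}I_{\varepsilon,f}$ interact correctly with tensor products, quotients and $\Spec$, so that every assertion is uniform in $\varepsilon\in\mathfrak m_p^{+}$; this is where the identity $\sqrt{(\varepsilon)}=\mathfrak m_p\G$ and the completeness-based lifting of B\'ezout relations do the real work. A secondary point is fixing the meaning of the isomorphism in~(6): the cleanest reading is that the canonical comparison map is an isomorphism, in which case surjectivity alone yields~(1) via the Chinese Remainder Theorem, while the component-counting argument is what is needed if only an abstract ring isomorphism is assumed. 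The hypothesis $p\ge 3$ enters only lightly — it keeps $2\in\mathcal O_p^{\times}$, so the dictionary between $\operatorname{res}(f,g)$ and the differences $\alpha-\beta$ is free of the prime $2$ (for $p=2$ the residues of $1$ and $-1$ coincide, and the distance/resultant correspondence would need a separate check) — so I would also confirm that none of the reductions above degenerate when $p=2$.
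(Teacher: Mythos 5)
Your proposal is correct, and its skeleton coincides with the paper's for most of the equivalences: $(1)\Leftrightarrow(2)\Leftrightarrow(3)$ via the proposition $\Pi_I\circ\Pi_J=\Pi_{I+J}$, $\pi_I\circ\pi_J=\pi_{I+J}$; $(1)\Leftrightarrow(4)$ by reducing modulo $\mathfrak m_p$ to $\overline{\Fp}[t,t^{-1}]$ and lifting a B\'ezout relation back (the paper asserts the lift exists from surjectivity of the reduction map, while you make the invertibility of $1+h$ by geometric series explicit, which is the honest version of that step); and $(4)\Leftrightarrow(5)$, which the paper dismisses as ``from the definition of resultant'' and you actually carry out via $\mathrm{res}(f,g)=\mathrm{lc}(f)^{\deg g}\mathrm{lc}(g)^{\deg f}\prod(\alpha-\beta)$ together with the observation that all roots are units so each factor satisfies $\left|\alpha-\beta\right|_p\le 1$. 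Where you genuinely diverge is condition $(6)$: the paper proves $(5)\Rightarrow(6)$ by explicitly constructing the orthogonal idempotents $P_1=k(t)f(t)/\mathrm{res}(f,g)$, $P_2=l(t)g(t)/\mathrm{res}(f,g)$ from the Sylvester identity — a construction it values in its own right as the definition of orthogonality on $\G$ — whereas you route $(1)\Rightarrow(6)$ through the abstract Chinese Remainder Theorem, supplying the injectivity computation $I_{\varepsilon,f}\cap I_{\varepsilon,g}=I_{\varepsilon,f}I_{\varepsilon,g}=I_{\varepsilon,fg}$ that the paper leaves implicit, and you prove the converse $(6)\Rightarrow(1)$ by counting connected components of $\Spec$ using $\sqrt{(\varepsilon)}=\mathfrak m_p\G$; the paper offers no argument at all for the converse direction of $(6)$, so your treatment is strictly more complete there. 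Your closing remarks are also apt: the uniformity in $\varepsilon\in\mathfrak m_p^{+}$ (especially $\varepsilon=1^{-}$) is exactly the bookkeeping the paper glosses over, and the hypothesis $p\ge 3$ is indeed never visibly used in either argument.
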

\begin{proof}
	1 $\Leftrightarrow$ 2 $\Leftrightarrow$ 3: Noted that: $$\Pi_{I_{\varepsilon,f}}\circ\Pi_{I_{\varepsilon,g}}=\Pi_{I_{\varepsilon,f}+I_{\varepsilon,g}}$$
	$$\pi_{I_{\varepsilon,f}}\circ\pi_{I_{\varepsilon,g}}=\pi_{I_{\varepsilon,f}+I_{\varepsilon,g}}$$
	
	1 $\Leftrightarrow$ 4: Let $\tilde{f},\tilde{g}$ be the reduction of $f,g$ in the ring $\overline{\Fp}\left [t,t^{-1}\right ] $, then $\tilde{f},\tilde{g}$ are coprime. We have:
	$$
	\exists k(t),l(t) \in \overline{\Fp}\left [t,t^{-1}\right ],
	k(t)\tilde{f}(t)+l(t)\tilde{g}(t)=1.
	$$
    If $\lambda \in \overline{\Fp}$ is a common root of $\tilde{f},\tilde{g}$, then we have:
    $$
    k(\lambda)\tilde{f}(\lambda)+l(\lambda)\tilde{g}(\lambda)=0,
    $$
    which leads to a contradiction. So we have: $d(\Rt f,\Rt g)=1$. 
    
     Suppose $d(\Rt f,\Rt g)=1$, on the one hand, the reduction $\tilde{f},\tilde{g}$ are coprime since $\tilde{f},\tilde{g}$ has no common root. On the other hand, the reduction map:
 	$$
 	\mathcal{O}_p\left \langle t,t^{-1} \right \rangle \to \overline{\Fp}\left [t,t^{-1}\right ]
 	$$
 	is surjective. There exists a lifting of $k(t),l(t)$ such that:
 	$$I_{\varepsilon,f}+I_{\varepsilon,g}=(1).$$
 	
 	4 $\Leftrightarrow$ 5: From the definition of resultant.
 
 	5 $\Leftrightarrow$ 6: There exists $k(t),l(t) \in \mathcal{O}_p\left [ t,t^{-1} \right ] $ such that:
 	$$
 	k(t)f(t)+l(t)g(t)=res(f,g).
 	$$
 	We can define two orthogonal projection $P_1,P_2 \in \G/I_{\varepsilon,fg}$:
 	   \begin{align*}
 		&1.P_1(t)=k(t)f(t)/res(f,g),P_2=l(t)g(t)/res(f,g);\\
 		&2.P_1^2=P_1,P_2^2=P_2;\\
 		&3.P_1+P_2=1,P_1P_2=P_2P_1=0; \\
 		&4.P_1(t)g(t)=0,P_2(t)f(t)=0; \\
        &5.P_1(t)f(t)=f(t),P_2(t)g(t)=g(t);\\
        &6.\forall h(t) \in \G/I_{\varepsilon,fg},h(t)=P_1(t)h(t)+P_2(t)h(t).
 	    \end{align*}
From the argument of Chinese remainder theorem, there exists a isomorphism: 
$$
\G/I_{\varepsilon,fg}\overset{P_1\times P_2}{\simeq} \G/I_{\varepsilon,g}\times \G/I_{\varepsilon,f}.
$$
\end{proof}
\begin{rem}
	This theorem tells us how to define the orthogonality property on $\G$.
\end{rem}
$\mathcal{O}_p^{\times}$ can be parameterized by the set of Teichmüller element $T\left ( \mathcal{O}_p^{\times} \right )$, which is a lifting of $\overline{\Fp}$. The distance of different Teichmüller element is always 1.

 Moreover, there exists a one to one correspondence between the balls in $\mathcal{O}_p^{\times}$ whose radius is 1 and $T\left ( \mathcal{O}_p^{\times} \right )$.
$$
T\left ( \mathcal{O}_p^{\times} \right )=\left \{ x\in \mathcal{O}_p^{\times}  \;\middle|\;  \exists k\in \N,x^{p^{k}}=x\right \} 
$$
\begin{cor}
Suppose $f \in u(\G), \lambda \in T\left ( \mathcal{O}_p^{\times} \right )$. Let 
$$
\Rt_{\lambda}(f)=\left \{ t_{\lambda}\in \Rt(f)   \;\middle|\;   \  \left |t_{\lambda}-\lambda \right |_p<1   \right \}, f_{\lambda}(t)=\prod_{t_{\lambda}\in \Rt_{\lambda}(f)} (t-t_{\lambda}).
$$
We have:
$$f(t)=\prod_{\lambda\in T(\mathcal{O}_p)}f_{\lambda}(t).$$
There exists a decomposition:
$$
\G/I_{\varepsilon,f}\simeq \prod_{\lambda\in T(\mathcal{O}_p)}\G/I_{\varepsilon,f_{\lambda}}
$$
Let 
$$
u_{\lambda}(\G)=\left \{ f\in u(\G)\;\middle|\; \Rt(f)=\Rt_{\lambda}(f)\right \}.
$$
\end{cor}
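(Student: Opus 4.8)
The plan is to reduce the statement to a finite iteration of the Chinese--remainder decomposition established in the preceding theorem (the equivalence of conditions $4$ and $6$, which is why the standing hypothesis $p\ge 3$ is inherited here), after grouping the roots of $f$ by their residue disks.

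First I would record the factorization $f=c\prod_{\lambda}f_{\lambda}$. Every root of $f\in u(\G)$ lies in $\mathcal{O}_p^{\times}$, hence has nonzero reduction in $\overline{\Fp}$, so it reduces to $\bar{\lambda}$ for a unique Teichm\"uller element $\lambda\in T(\mathcal{O}_p^{\times})$; equivalently, it lies in exactly one of the pairwise disjoint residue disks $\{\,t:|t-\lambda|_p<1\,\}$. Since $f$ has only finitely many roots, $\Rt(f)=\bigsqcup_{\lambda}\Rt_{\lambda}(f)$ with all but finitely many $\Rt_{\lambda}(f)$ empty, and collecting the corresponding factors gives $f=c\prod_{\lambda}f_{\lambda}$ with $c\in\mathcal{O}_p^{\times}$, each $f_{\lambda}$ monic with $\Rt(f_{\lambda})=\Rt_{\lambda}(f)$; in particular $f_{\lambda}\in u_{\lambda}(\G)$, and $f_{\lambda}=1$ for all but finitely many $\lambda$.

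Next I would verify the coprimality that CRT needs. If $t_{\lambda}$ is a root of $f_{\lambda}$ and $t_{\mu}$ a root of $f_{\mu}$ with $\lambda\ne\mu$, then $|t_{\lambda}-\lambda|_p<1$, $|t_{\mu}-\mu|_p<1$, and $|\lambda-\mu|_p=1$ (distinct Teichm\"uller representatives have distinct residues), so the ultrametric inequality forces $|t_{\lambda}-t_{\mu}|_p=1$; hence $d(\Rt f_{\lambda},\Rt f_{\mu})=1$, and more generally $d\big(\Rt f_{\lambda_0},\Rt(\prod_{\lambda\ne\lambda_0}f_{\lambda})\big)=1$ for any chosen $\lambda_0$. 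Now I induct on the number $n$ of indices with $f_{\lambda}\ne 1$: for $n\le 1$ there is nothing to prove; for the inductive step write $f=c\,f_{\lambda_0}\,g$ with $g=\prod_{\lambda\ne\lambda_0}f_{\lambda}\in u(\G)$, note that $I_{\varepsilon,f}=I_{\varepsilon,f_{\lambda_0}g}$ because $c$ is a unit, apply the preceding theorem to get $\G/I_{\varepsilon,f}\simeq\G/I_{\varepsilon,f_{\lambda_0}}\times\G/I_{\varepsilon,g}$, and apply the induction hypothesis to $g$. Finally, the indices $\lambda$ with $f_{\lambda}=1$ contribute $\G/I_{\varepsilon,f_{\lambda}}=\G/(1)=0$, so they may be reinstated without changing the product, yielding $\G/I_{\varepsilon,f}\simeq\prod_{\lambda\in T(\mathcal{O}_p^{\times})}\G/I_{\varepsilon,f_{\lambda}}$.

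I do not expect a serious obstacle: the argument is bookkeeping on top of the theorem. The two points that need care are (i) that the factorization $f=c\prod_{\lambda}f_{\lambda}$ is genuinely finite and that the unit constant $c$ does not disturb $I_{\varepsilon,f}$, and (ii) checking at each stage of the induction that $f_{\lambda_0}$ is coprime to the \emph{entire} remaining product $g$, not merely to each $f_{\mu}$ separately, which is precisely where the ultrametric estimate above is invoked. The concluding definition of $u_{\lambda}(\G)$ needs nothing beyond the observation, already made, that every $f_{\lambda}$ lies in it.
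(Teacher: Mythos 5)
Your proof is correct and is precisely the intended derivation: the paper states this corollary without any proof, and the expected argument is exactly your reduction to the preceding theorem via the partition of $\Rt(f)$ into residue disks around Teichm\"uller representatives, followed by a finite iteration of the $4\Leftrightarrow 6$ equivalence. You also rightly note two details the paper glosses over --- that the displayed factorization should read $f=c\prod_{\lambda}f_{\lambda}$ with a unit constant $c\in\mathcal{O}_p^{\times}$ (harmless, since $c$ does not change the ideal $I_{\varepsilon,f}$), and that coprimality must be checked against the whole remaining product $g$ at each inductive step, which your ultrametric estimate $|t_{\lambda}-t_{\mu}|_p=1$ handles.
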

\begin{defn}
	We call $f_{n}(t)=t^n-1,n\in \Z$ \textbf{principal unit polynomial}, $I_{\varepsilon,n}=I_{\varepsilon,f_{n}(t)}$ \textbf{principal unit ideal}.
\end{defn}
\begin{prop}
	Every unit ideal $I_{\varepsilon ,f}$ include a principal unit ideal $I_{\varepsilon ,n}$ for some $n \in \Z$.
\end{prop}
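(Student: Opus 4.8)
The plan is to unwind the conclusion to a concrete divisibility over a quotient ring and then produce $n$ by a lifting argument. By definition $I_{\varepsilon,f}=\varepsilon\G+f\G$ (with $\mathfrak{m}_p\G+f\G$ when $\varepsilon=1^{-}$), and $I_{\varepsilon,n}\subseteq I_{\varepsilon,f}$ is the same as $t^{n}-1\in I_{\varepsilon,f}$. Since $f\in u(\G)$ is a polynomial whose leading and constant coefficients lie in $\mathcal{O}_p^{\times}$ (multiply by a power of $t$, a unit of $\G$, if necessary), reducing modulo $\varepsilon$ the claim becomes: the image $\bar f$ divides $\overline{t^{n}-1}$ in $\G/\varepsilon\G=(\mathcal{O}_p/\varepsilon\mathcal{O}_p)[t,t^{-1}]$. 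I would first dispose of $\varepsilon=1^{-}$, where the coefficient ring is the field $\overline{\Fp}$: each root of $\bar f$ lies in $\overline{\Fp}^{\times}$, hence is a root of unity, so with $d$ the lcm of their orders and $p^{N}\ge\deg f$ the characteristic-$p$ identity $t^{dp^{N}}-1=(t^{d}-1)^{p^{N}}$ gives $\bar f\mid t^{dp^{N}}-1$; thus $n=dp^{N}$ works. (Equivalently one may combine the decomposition $f=\prod_{\lambda}f_{\lambda}$ of the preceding Corollary with the foregoing theorem.)

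For $\varepsilon\in\mathfrak{m}_p$ the ring $\mathcal{O}_p/\varepsilon\mathcal{O}_p$ is neither a field nor reduced, and this is where the work lies. Let $\lambda_1,\dots,\lambda_r\in\mathcal{O}_p^{\times}$ be the roots of $f$ with multiplicities $m_1,\dots,m_r$, $\sum m_i=D=\deg f$. Since $f$ is monic up to a unit, divide in $\mathcal{O}_p[t]$:
$$t^{n}-1=f(t)Q(t)+R(t),\qquad Q,R\in\mathcal{O}_p[t],\ \deg R<D,$$
so that $\bar f\mid\overline{t^{n}-1}$ exactly when $R\in\varepsilon\mathcal{O}_p[t]$. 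Over $\CC_p$ the remainder $R$ is the unique polynomial of degree $<D$ with $R\equiv t^{n}-1\pmod{(t-\lambda_i)^{m_i}}$ for every $i$, i.e.\ it is determined by the ``jet data'' $(t^{n}-1)^{(j)}(\lambda_i)/j!$ for $0\le j<m_i$, whose entries are $\lambda_i^{n}-1$ and $\binom{n}{j}\lambda_i^{n-j}$. Passing from this jet data to the coefficients of $R$ is inversion of a confluent (Hermite) Vandermonde matrix $V$ with entries in $\mathcal{O}_p$ and $\det V=\pm\prod_{i<j}(\lambda_i-\lambda_j)^{m_im_j}$; since $V^{-1}=(\det V)^{-1}\operatorname{adj}V$ with $\operatorname{adj}V$ integral, it suffices to force every entry of the jet data into $\varepsilon\,(\det V)\,\mathcal{O}_p$.

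To meet this bound, write $\lambda_i=\omega_iu_i$ with $\omega_i$ Teichmüller of finite order $d_i$ and $u_i\in 1+\mathfrak{m}_p$, put $d=\operatorname{lcm}_i d_i$, and take $n=d\,p^{K}$. Then $d\mid n$ forces $\lambda_i^{n}=u_i^{n}=(u_i^{d})^{p^{K}}$, and the elementary ultrametric estimate $v_p(w^{p}-1)>v_p(w-1)$ for $w\in 1+\mathfrak{m}_p$ shows $v_p(\lambda_i^{n}-1)\to\infty$ as $K\to\infty$; meanwhile Kummer's theorem gives $v_p\binom{dp^{K}}{j}\ge K-v_p(j)\ge K-\log_p D$ for $0<j<p^{K}$. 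Hence for $K$ large enough in terms of $v_p(\varepsilon)$, of $v_p(\det V)=\sum_{i<j}m_im_j\,v_p(\lambda_i-\lambda_j)$, and of $D$, all the jet data lies in $\varepsilon(\det V)\mathcal{O}_p$, whence $R\in\varepsilon\mathcal{O}_p[t]$, so $\bar f\mid\overline{t^{n}-1}$, so $t^{n}-1\in I_{\varepsilon,f}$, and $I_{\varepsilon,n}=I_{\varepsilon,f_n(t)}\subseteq I_{\varepsilon,f}$ as desired (one may take $n\ge 1$, since $I_{\varepsilon,-n}=I_{\varepsilon,n}$ because $t^{-n}$ is a unit of $\G$).

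The main obstacle is precisely the coefficient ring $\mathcal{O}_p/\varepsilon\mathcal{O}_p$: its radical $\mathfrak{m}_p/\varepsilon\mathcal{O}_p$ is idempotent, not nilpotent, so ``$\overline{t^{n}-1}$ vanishes to high order at each root of $\bar f$'' does \emph{not} by itself yield divisibility by $\bar f$; one must lift to $\mathcal{O}_p[t]$ and pay the Vandermonde determinant, which is what forces the exponent $n$ to depend not only on $\varepsilon$ and $\deg f$ but on how tightly the roots of $f$ cluster $p$-adically. If instead one only wants the statement after reduction, i.e.\ for $\varepsilon=1^{-}$, the first paragraph suffices and none of this is needed.
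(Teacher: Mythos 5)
Your proof is correct, but it takes a genuinely different and much more computational route than the paper's. The paper argues softly: since the constant term and leading coefficient of $\tilde f$ are units, the companion matrix of $\tilde f$ lies in a finite group $\GL_k(\mathbb{F}_q)$ and so has finite order $N$, whence $\tilde f \mid t^N-1$ over $\overline{\Fp}$; this says precisely that the class $u$ of $t^N$ in the quotient Banach algebra $\G/I_{\varepsilon,f}$ satisfies $\left|u-1\right|<1$, and the ultrametric estimate $\left|u^p-1\right|\le \sup\left(\left|u-1\right|^p,\left|p\right|_p\left|u-1\right|\right)<\left|u-1\right|$ iterates to give $\left|u^{p^l}-1\right|\to 0$, so for $l$ large $t^{p^lN}-1$ has a representative of norm $\le\left|\varepsilon\right|$ and hence lies in $I_{\varepsilon,f}$, giving $I_{\varepsilon,f}\supseteq I_{\varepsilon,p^lN}$. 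In particular the obstacle you isolate in your closing paragraph --- that $\mathcal{O}_p/\varepsilon\mathcal{O}_p$ is non-reduced, so vanishing of jets at the roots does not directly yield divisibility --- is a real defect of the naive approach you are guarding against, but the paper bypasses it entirely: reducing modulo $f$ \emph{is} the Hermite interpolation, the division by the confluent Vandermonde determinant is silently absorbed into the quotient norm, and the $p$-power contraction then does all the work uniformly in the clustering of the roots. What your longer argument buys in exchange is an explicit, effective bound on the admissible $n$ in terms of $v_p(\varepsilon)$, $\deg f$, and $\sum_{i<j}m_im_j\,v_p(\lambda_i-\lambda_j)$, which the paper's two-line proof does not exhibit; your first paragraph (the $\varepsilon=1^{-}$ case) is essentially the same mod-$\mathfrak{m}_p$ step the paper starts from. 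Both proofs are sound.
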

\begin{proof}
 Suppose the reduction of $f$ in $\overline{\Fp}\left [t,t^{-1}\right ] $ is $\tilde{f}$, since every invertible matrix in $M_k(\overline{\Fp})$ has finite order, there exists $N$ such that $\tilde{f}(t) \mid t^N-1$. In the $\G/I_{\varepsilon ,f}$ we have:
 $$
\left |t^N-1\right |< 1.
 $$
 So there exists $l$ such that:
 $$
 \left |t^{p^lN}-1\right |<  \left |\varepsilon \right |\implies I_{\varepsilon ,f} \supseteq  I_{\varepsilon ,p^lN}.
 $$
\end{proof}
\begin{rem}
$\G/I_{\varepsilon,n}$ is a subgroup scheme of $\G$. However, $\G/I_{\varepsilon,n}$ behaves bad in the category of non-commutative ultrametric $\mathcal{O}_p$-Banach algebra.
\end{rem}
\begin{prop}
	   \begin{align*}
		&1.I_{\varepsilon,n}+I_{\varepsilon,m}=I_{\varepsilon,gcd(n,m)};\\
		&2.I_{\varepsilon,n}\cap I_{\varepsilon,m}=I_{\varepsilon,lcm(n,m)}.
	\end{align*}
\end{prop}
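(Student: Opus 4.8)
The plan is to carry everything out inside $\G=\mathcal{O}_p\langle t,t^{-1}\rangle$ and to reduce both identities to elementary divisibility facts for the polynomials $t^n-1$, dealing with the reduction parameter $\varepsilon$ only at the very end. I first reduce to $\varepsilon\in\mathfrak{m}_p$, so that $I_{\varepsilon,n}=(\varepsilon,t^n-1)$; the case $\varepsilon=1^{-}$ then follows by taking unions, since the family $\{I_{\varepsilon,n}\}_{\varepsilon\in\mathfrak{m}_p}$ is totally ordered by inclusion ($\mathcal{O}_p$ being a valuation ring), whence $I_{1^{-},n}\ast I_{1^{-},m}=\bigcup_{\varepsilon}(I_{\varepsilon,n}\ast I_{\varepsilon,m})$ for $\ast\in\{+,\cap\}$ and the fixed-$\varepsilon$ result propagates. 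I also record the normalizations $I_{\varepsilon,-n}=I_{\varepsilon,n}$ (because $t^n$ is a unit and $t^n-1=-t^n(t^{-n}-1)$) and $I_{\varepsilon,0}=\varepsilon\G$, so that I may assume $0<m\le n$ and interpret $\gcd$ and $\operatorname{lcm}$ with the usual conventions at $0$.

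For (1) the key is the ideal identity $(t^n-1,\,t^m-1)=(t^{\gcd(n,m)}-1)$ in $\mathcal{O}_p[t,t^{-1}]$. This I would obtain from the one-step relation $t^n-1=t^{\,n-m}(t^m-1)+(t^{\,n-m}-1)$, which gives $(t^n-1,t^m-1)=(t^{\,n-m}-1,t^m-1)$; iterating the subtractive Euclidean algorithm on the exponents collapses the pair to $(t^{\gcd(n,m)}-1)$, and since $t^{\gcd(n,m)}-1\mid t^n-1$ the same chain supplies both inclusions. Adjoining $\varepsilon$ then gives $I_{\varepsilon,n}+I_{\varepsilon,m}=(\varepsilon,\,t^{\gcd(n,m)}-1)=I_{\varepsilon,\gcd(n,m)}$, and the $1^{-}$ case follows as above.

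For (2), set $L=\operatorname{lcm}(n,m)$. The inclusion $I_{\varepsilon,L}\subseteq I_{\varepsilon,n}\cap I_{\varepsilon,m}$ is immediate: $\varepsilon$ lies in all three ideals, and $t^n-1$ and $t^m-1$ both divide $t^L-1$, so $t^L-1\in(t^n-1)\cap(t^m-1)$. For the reverse inclusion I would pass to $\bar\G=\G/I_{\varepsilon,L}$ and try to show that the images $\bar I_n=(t^n-1)\bar\G$ and $\bar I_m=(t^m-1)\bar\G$ (using $\bar\varepsilon=0$ in $\bar\G$) meet only in $0$. Here I would invoke the cluster decomposition of the preceding corollary applied to $t^L-1\in u(\G)$, writing $t^L-1=\prod_\lambda(t^L-1)_\lambda$ over the Teichmüller representatives and $\bar\G\cong\prod_\lambda\G/I_{\varepsilon,(t^L-1)_\lambda}$, and then checking in each cluster factor whether $t^n-1$ or $t^m-1$ acts invertibly.

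This last step is where I expect the real obstacle, and where the statement as written appears to need an extra hypothesis: within a single cluster of $\mu_L$ it is not generally true that every root already lies in $\mu_n$ or in $\mu_m$ (for instance a primitive $L$-th root when $n,m$ are the two prime factors of $L$), so $\bar I_n\cap\bar I_m$ need not vanish. The argument does go through cleanly whenever $n\mid m$ or $m\mid n$ — in particular whenever $n$ and $m$ are powers of $p$ — since then one of $I_{\varepsilon,n},\,I_{\varepsilon,m}$ contains the other and (2) is formal. In the general case I would therefore either restrict (2) to this nested situation, or replace the right-hand side by the open ideal cut out by $\mu_n\cup\mu_m$, which is the honest geometric content of $I_{\varepsilon,n}\cap I_{\varepsilon,m}$ under the dictionary ``$\cap$ of open ideals $\leftrightarrow$ union of measurable sets'' established above.
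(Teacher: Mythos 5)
The paper states this proposition without proof, so there is nothing to compare your argument against; judged on its own terms, your treatment of (1) is correct and your suspicion about (2) is vindicated: part (2) is false as written. For (1), the subtractive Euclidean algorithm on exponents via $t^n-1=t^{n-m}(t^m-1)+(t^{n-m}-1)$ does give $(t^n-1,t^m-1)=(t^{\gcd(n,m)}-1)$ in $\mathcal{O}_p[t,t^{-1}]$, and adjoining $\varepsilon$ yields the claim; your reduction of the $\varepsilon=1^{-}$ case to fixed $\varepsilon\in\mathfrak{m}_p$ by taking unions over the chain $\{\varepsilon\mathcal{O}_p\}$ is also fine, as are the normalizations at $n\le 0$.

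For (2), the obstruction you locate is real, and it can be made completely explicit. Take $n=2$, $m=3$, $L=6$, and $h(t)=(t^2-1)(t^2+t+1)=t^4+t^3-t-1$. Then $h=(t^2-1)(t^2+t+1)\in(t^2-1)\subseteq I_{\varepsilon,2}$ and $h=(t^3-1)(t+1)\in(t^3-1)\subseteq I_{\varepsilon,3}$, so $h\in I_{\varepsilon,2}\cap I_{\varepsilon,3}$; but $h\notin I_{\varepsilon,6}$, since reducing any relation $h=\varepsilon a+(t^6-1)b$ modulo $\mathfrak{m}_p$ gives $\bar h=(t^6-1)\bar b$ in $\overline{\Fp}[t,t^{-1}]$, which is impossible because $\bar h\ne 0$ has exponent-width $4$ while every nonzero multiple of $t^6-1$ has width at least $6$. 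The structural reason is the one you give: $\operatorname{lcm}(t^n-1,t^m-1)=\prod_{d\mid n\ \text{or}\ d\mid m}\Phi_d(t)$ properly divides $t^{\operatorname{lcm}(n,m)}-1$ unless one of $n,m$ divides the other, the missing factors being the $\Phi_d$ with $d\mid L$ but $d\nmid n$, $d\nmid m$. Geometrically $I_{\varepsilon,n}\cap I_{\varepsilon,m}$ cuts out the $\varepsilon$-neighbourhood of $\mu_n\cup\mu_m$ rather than of $\mu_L$, consistent with the paper's own dictionary between $\cap$ of open ideals and unions of measurable sets; so the natural repair is exactly the one you propose, namely $I_{\varepsilon,f}$ with $f=\operatorname{lcm}(t^n-1,t^m-1)\in u(\G)$ on the right-hand side, with equality to $I_{\varepsilon,\operatorname{lcm}(n,m)}$ holding precisely in the nested case $n\mid m$ or $m\mid n$. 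Since the subsequent propositions only use the chain $\{I_{\varepsilon,n!}\}_{n\ge 1}$, which is totally ordered by divisibility, this correction does not affect anything downstream.
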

Let $M \in Obj(\GB_S)$, define $M_\varepsilon=M/\varepsilon M$. We have:
\begin{prop}
	\begin{align*}
	\underset{n \ge 1}{\varinjlim}\Pi_{I_{\varepsilon,n}}(M_{\varepsilon})&=	\underset{n \ge 1}{\varinjlim}\Hom_{\GB}(\G/I_{\varepsilon,n},M_{\varepsilon})\\
    &=\underset{n \ge 1}{\varinjlim}\Hom_{\GB}(\G/I_{\varepsilon,n!},M_{\varepsilon})\\
    &=\underset{n \ge 1}{\varinjlim}\underset{\lambda\in T\left ( \mathcal{O}_p^{\times} \right)}{\oplus}M_{\varepsilon,n,\lambda,tor}\\
    &=\underset{\lambda\in T\left ( \mathcal{O}_p^{\times} \right )}{\oplus}\underset{n \ge 1}{\varinjlim}M_{\varepsilon ,n,\lambda,tor}\\
    &=\underset{\lambda\in T\left ( \mathcal{O}_p^{\times} \right )}{\oplus}M_{\varepsilon ,\lambda,tor}\\
    &=M_{\varepsilon ,tor},
	\end{align*}
	we have:
	\begin{align*}
	M_{\varepsilon ,tor}&=\left \{ m\in M_{\varepsilon} \;\middle|\; \exists f\in u(\G),f(t)m=0\right \} \\
	M_{\varepsilon ,\lambda,tor}&=\left \{ m\in M_{\varepsilon,tor} \;\middle|\; \exists f\in u_{\lambda}(\G),f(t)m=0\right \}.
	\end{align*}
\end{prop}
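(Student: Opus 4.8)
I will establish the chain of equalities in the displayed computation one link at a time, reading each $\Pi_{I_{\varepsilon,n}}(M_\varepsilon) = \Hom_{\GB}(\G/I_{\varepsilon,n}, M_\varepsilon)$ from the very definition of the left projection functor, and then identifying the direct limit with the $u(\G)$-torsion submodule of $M_\varepsilon$. The key device throughout is that for a bounded module $M_\varepsilon$ over $\overline{\Fp}\langle t,t^{-1}\rangle$-type reductions, ``being killed by $t^n-1$'' and ``being killed by some unit polynomial'' are linked by the divisibility fact already proved in Proposition (every unit ideal contains a principal unit ideal $I_{\varepsilon,n}$): this is exactly what lets me replace the diagram $\{I_{\varepsilon,n}\}_n$ by the cofinal subsystem $\{I_{\varepsilon,n!}\}_n$ and then by all of $u(\G)$ without changing the colimit.

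First I would record that $\Hom_{\GB}(\G/I_{\varepsilon,n}, M_\varepsilon) = \{ m \in M_\varepsilon \mid (t^n-1)m = 0\}$, since a contraction $\G/I_{\varepsilon,n}\to M_\varepsilon$ is determined by the image of $1$, and the image must be annihilated by the defining relation $t^n-1$ (one checks the boundedness is automatic because $M_\varepsilon$ is bounded and $\G$ is generated as an $\mathcal{O}_p$-module by elements of norm $\le 1$). The transition maps in the system are the inclusions coming from $I_{\varepsilon,n}\subseteq I_{\varepsilon,nm}$ (equivalently $(t^n-1)\mid(t^{nm}-1)$), so $\varinjlim_n \Pi_{I_{\varepsilon,n}}(M_\varepsilon) = \bigcup_n \{m : (t^n-1)m=0\}$. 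Passing to the subsystem indexed by $n!$ is legitimate because $\{n!\}$ is cofinal in $(\Z_{\ge1}, \mid)$; this is the second line. For the third line I invoke the Corollary's decomposition $\G/I_{\varepsilon,f}\simeq \prod_{\lambda\in T(\mathcal{O}_p^\times)}\G/I_{\varepsilon,f_\lambda}$ applied to $f = f_{n!} = t^{n!}-1$, whose roots are exactly the $n!$-th roots of unity, grouped by which Teichmüller class $\lambda$ they reduce to; taking $\Hom$ out of a product turns it into a direct sum, giving $\bigoplus_\lambda M_{\varepsilon,n,\lambda,tor}$ where the $\lambda$-summand consists of elements killed by $f_{n!,\lambda}$. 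The fourth line is the standard commutation of a filtered colimit with a direct sum (here the sum is over the fixed index set $T(\mathcal{O}_p^\times)$, and for each fixed $\lambda$ the colimit over $n$ stabilizes the ``$\lambda$-primary'' torsion); the fifth line is the definition of $M_{\varepsilon,\lambda,tor}$, and the sixth reassembles the direct sum over $\lambda$ into $M_{\varepsilon,tor}$, using again that every $f\in u(\G)$ divides some $t^{N}-1$ (reduction of an invertible matrix has finite order) so that $u(\G)$-torsion is detected on the principal unit polynomials — this is what makes the final two displayed descriptions of $M_{\varepsilon,tor}$ and $M_{\varepsilon,\lambda,tor}$ correct.

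I expect the main obstacle to be the fourth equality, the interchange $\varinjlim_n \bigoplus_\lambda (-) = \bigoplus_\lambda \varinjlim_n(-)$, because one must be careful that the direct sum in the category $\GB_S$ of strict ultrametric Banach modules is the honest algebraic direct sum here (each element of $M_{\varepsilon,tor}$ involves only finitely many $\lambda$-components, since it is killed by a single polynomial with finitely many roots) rather than a completed direct sum — otherwise the colimit would not commute with it. I would handle this by proving a short lemma: for $m\in M_{\varepsilon,tor}$, if $f(t)m=0$ with $f\in u(\G)$, then the Teichmüller classes appearing among $\Rt(f)$ form a finite set $S$, and $m$ lies in $\bigoplus_{\lambda\in S}M_{\varepsilon,\lambda,tor}$ via the idempotents $P_\lambda$ furnished by the Chinese-remainder argument in the proof of the main Theorem (applied to the pairwise-coprime factors $f_\lambda$). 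The remaining equalities are then bookkeeping: cofinality of $\{n!\}$, $\Hom$ versus products, and the definitional unfoldings. Throughout, the only genuinely non-formal input is the finite-order-of-invertible-matrices fact, which is already established earlier, so no new estimates are needed.
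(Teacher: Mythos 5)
The paper states this proposition without proof, and your argument supplies exactly the assembly the surrounding text intends: unwinding $\Pi_{I_{\varepsilon,n}}$ via $\Hom_{\GB}(\G/I_{\varepsilon,n},-)$, cofinality of $\{n!\}$ in the divisibility order, the Corollary's Teichmüller-class factorization $\G/I_{\varepsilon,f}\simeq\prod_{\lambda}\G/I_{\varepsilon,f_\lambda}$ with its CRT idempotents, and the earlier Proposition that every unit ideal contains a principal one to identify the colimit with $M_{\varepsilon,tor}$. Your treatment of the colimit/direct-sum interchange (finiteness of the support of each torsion element) is correct and is the one point genuinely needing care, so the proposal stands as written.
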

\begin{prop}
	\begin{align*}
		\underset{n \ge 1}{\varprojlim}\pi_{I_{\varepsilon,n}}(M_{\varepsilon})&= \underset{n \ge 1}{\varprojlim}M_{\varepsilon}/(t^n-1)M_{\varepsilon}\\
		&=\underset{n \ge 1}{\varprojlim}M_{\varepsilon}/(t^{n!}-1)M_{\varepsilon}\\
        &=\underset{n \ge 1}{\varprojlim}\prod_{\lambda \in T\left ( \mathcal{O}_p^{\times} \right )} M_{\varepsilon,n,\lambda} \\
        &=\prod_{\lambda \in T\left ( \mathcal{O}_p^{\times} \right )}\underset{n \ge 1}{\varprojlim} M_{\varepsilon,n,\lambda} \\
        &=\prod_{\lambda \in T\left ( \mathcal{O}_p^{\times} \right )} M_{\varepsilon,\lambda}\\
        &=\hat{M_{\varepsilon}} .
	\end{align*}
\end{prop}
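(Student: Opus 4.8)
The plan is to read the displayed chain as six successive identifications and to check them one at a time; only the third line, the passage to the $\prod_{\lambda}$-decomposition, carries real content, the other five being formal manipulations of inverse limits, products and definitions.

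First I would unwind the two outer definitions. Since $\pi_{I}(N)=N/IN$ and, for $\varepsilon\in\mathfrak{m}_p$, $I_{\varepsilon,n}=(\varepsilon,t^{n}-1)$ while $\varepsilon$ acts as $0$ on $M_{\varepsilon}=M/\varepsilon M$, we get $I_{\varepsilon,n}M_{\varepsilon}=(t^{n}-1)M_{\varepsilon}$, hence $\pi_{I_{\varepsilon,n}}(M_{\varepsilon})=M_{\varepsilon}/(t^{n}-1)M_{\varepsilon}$ (the case $\varepsilon=1^{-}$ by taking the union over $\mathfrak{m}_p$); this is the first equality. The symbol $\hat{M_{\varepsilon}}$ is, by definition, the completion $\varprojlim_{n}M_{\varepsilon}/(t^{n}-1)M_{\varepsilon}$ of $M_{\varepsilon}$ for the filtration by the open--closed submodules $(t^{n}-1)M_{\varepsilon}$, so the last equality holds once the middle ones are in place. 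For the second equality I would invoke the proposition on principal unit ideals: $m\mid n$ gives $I_{\varepsilon,n}\subseteq I_{\varepsilon,m}$, hence a surjection $M_{\varepsilon}/(t^{n}-1)M_{\varepsilon}\twoheadrightarrow M_{\varepsilon}/(t^{m}-1)M_{\varepsilon}$, so $n\mapsto M_{\varepsilon}/(t^{n}-1)M_{\varepsilon}$ is an inverse system over the directed poset $(\Z_{\ge 1},\mid)$ of positive integers under divisibility; since the chain $1!\mid 2!\mid 3!\mid\cdots$ is cofinal in it, the two inverse limits coincide.

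The core of the argument is the third line, which I would derive from the preceding orthogonality theorem (equivalence of conditions $4$ and $6$) and its corollary. Fix $n$. The polynomial $t^{n!}-1$ has $n!$ distinct roots, all in $\mathcal{O}_p^{\times}$; grouping them by Teichmüller class and applying the corollary gives $t^{n!}-1=\prod_{\lambda\in T(\mathcal{O}_p^{\times})}f_{n,\lambda}(t)$, where $f_{n,\lambda}$ collects the roots reducing to $\lambda$ and equals $t^{p^{v_p(n!)}}-\lambda^{p^{v_p(n!)}}\in u_{\lambda}(\G)$, with $f_{n,\lambda}=1$ for all but the finitely many $\lambda$ of order dividing $n!$. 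Distinct classes give factors whose root sets are at distance $1$, so the orthogonality theorem applies iteratively and produces idempotents $P^{(n)}_{\lambda}$ in the Banach algebra $\G/I_{\varepsilon,t^{n!}-1}$ realizing $\G/I_{\varepsilon,t^{n!}-1}\simeq\prod_{\lambda}\G/I_{\varepsilon,f_{n,\lambda}}$; cutting $M_{\varepsilon}/(t^{n!}-1)M_{\varepsilon}$ by these idempotents (a finite product, so no completeness issue) gives $M_{\varepsilon}/(t^{n!}-1)M_{\varepsilon}\simeq\prod_{\lambda}M_{\varepsilon}/f_{n,\lambda}(t)M_{\varepsilon}=:\prod_{\lambda}M_{\varepsilon,n,\lambda}$, each $M_{\varepsilon,n,\lambda}$ being a direct summand cut out by a continuous idempotent, hence again a strict $\mathcal{O}_p$-Banach module. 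The point that must be verified is compatibility in $n$: for $n\mid m$ one has $f_{n,\lambda}\mid f_{m,\lambda}$ and $f_{n,\lambda}$ coprime to $f_{m,\lambda'}$ for $\lambda'\neq\lambda$, so the transition map $M_{\varepsilon}/(t^{m!}-1)M_{\varepsilon}\twoheadrightarrow M_{\varepsilon}/(t^{n!}-1)M_{\varepsilon}$ respects the $\lambda$-grading and is the product $\prod_{\lambda}$ of the surjections $M_{\varepsilon,m,\lambda}\twoheadrightarrow M_{\varepsilon,n,\lambda}$; thus the isomorphisms assemble into a morphism of inverse systems indexed by $(\Z_{\ge 1},\mid)$, and passing to $\varprojlim_{n}$ yields the third equality.

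The last three identifications are then formal: inverse limits commute with products, which gives the fourth equality $\varprojlim_{n}\prod_{\lambda}M_{\varepsilon,n,\lambda}=\prod_{\lambda}\varprojlim_{n}M_{\varepsilon,n,\lambda}$; one defines $M_{\varepsilon,\lambda}:=\varprojlim_{n}M_{\varepsilon,n,\lambda}$, the completion of $M_{\varepsilon}$ at the Teichmüller point $\lambda$, which is the fifth; and chaining everything back identifies $\hat{M_{\varepsilon}}$ with $\prod_{\lambda\in T(\mathcal{O}_p^{\times})}M_{\varepsilon,\lambda}$, the sixth. In the language of the earlier remark this exhibits the ``categorical closure'' $\varprojlim_{I}\pi_{I}(M_{\varepsilon})$ as a product over the Teichmüller parameters. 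I expect the only real obstacle to be step three: producing the Chinese-remainder idempotents inside the Banach algebra $\G/I_{\varepsilon,t^{n!}-1}$ (so that every submodule that appears is open--closed and the whole decomposition lives in $\GB_S$), and checking that these idempotents are compatible as $n$ grows so that the product decomposition survives the inverse limit — both being handled by the iterated application of the orthogonality theorem, whose standing hypothesis $p\ge 3$ is in force throughout.
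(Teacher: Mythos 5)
Your proposal is correct and follows the route the paper intends: the chain is justified line by line exactly as you do it, using the definition of $\pi_{I}$ and $I_{\varepsilon,n}$, cofinality of $\{n!\}$ in the divisibility order, the orthogonality theorem (conditions 4--6) together with its corollary $\G/I_{\varepsilon,f}\simeq\prod_{\lambda}\G/I_{\varepsilon,f_{\lambda}}$ to split $M_{\varepsilon}/(t^{n!}-1)M_{\varepsilon}$ over Teichm\"uller classes, and the formal commutation of $\varprojlim$ with products. Your explicit identification $f_{n,\lambda}(t)=t^{p^{v_p(n!)}}-\lambda^{p^{v_p(n!)}}$ and the check that the idempotents are compatible along the transition maps are exactly the details the paper leaves implicit.
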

\begin{rem}
	$M_{\varepsilon ,tor}$ and $\hat{M_{\varepsilon}}$ could be $0$ for some $M$. Let $Frac\left ( \G \right )$ be the fraction field of $\G$, which has a multiplicative norm induced by $\left |-\right |_{\G}$. Let $K$ be the completion of $Frac\left ( \G \right )$, $R$ be the integral ring of $K$, which is a strict $\GB$ module. We have: 
	$$
	R_{\varepsilon ,tor}=\hat{R_{\varepsilon}}=0.
	$$
	This example shows that the generic point of $\G$ has an influence on the category $\GB$.
\end{rem}
\begin{prop}
	Let $M \in Obj(\GB_S)$, $M_i=\cup_{\varepsilon \in \mathfrak{m}_p}\varepsilon M$, $M_1=M/M_i$. Suppose: 
	$$
	M_{\varepsilon ,tor}=\hat{M_{\varepsilon}}=0,\forall \varepsilon \in \mathfrak{m}_p^{+},
	$$ 
	then $M_1$ is a $\overline{\Fp}\left ( t \right ) $-linear space.
\end{prop}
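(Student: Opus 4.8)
The plan is to realize $M_1$ as a torsion-free $\overline{\Fp}[t,t^{-1}]$-module that is divisible by every prime, and then to apply the structure theory of modules over a PID. Since $M$ is a $\G$-Banach module one has $\varepsilon\G\cdot M=\varepsilon(\G\cdot M)\subseteq\varepsilon M$ for every $\varepsilon\in\mathfrak{m}_p$, so the ideal $\G_i=\bigcup_{\varepsilon\in\mathfrak{m}_p}\varepsilon\G$ annihilates $M_1=M/M_i$ and $M_1$ becomes a module over $\G/\G_i$. Inspecting the condition $a_n\to 0$ shows $\G/\G_i\cong\overline{\Fp}[t,t^{-1}]$, which is a PID (a localization of $\overline{\Fp}[t]$); since $\overline{\Fp}$ is algebraically closed and $t$ is a unit there, its nonzero primes are exactly the $(t-\mu)$ with $\mu\in\overline{\Fp}^{\times}$. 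Note that $M_1=M_{1^{-}}$ in the notation $M_\varepsilon=M/\varepsilon M$ (with $\varepsilon=1^{-}$), and that $I_{1^{-},n}M_1=(t^n-1)M_1$ since $\G_i$ acts as $0$ on $M_1$; thus only the hypotheses at $\varepsilon=1^{-}$ will be used.

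I would then translate the two hypotheses into statements about $M_1$. Each $f\in u(\G)$ reduces to a nonzero element $\bar f\in\overline{\Fp}[t,t^{-1}]$, and conversely every nonzero Laurent polynomial is a unit times a reduction of some $f\in u(\G)$ ($\overline{\Fp}$ being algebraically closed); hence $M_{1^{-},tor}=0$ is precisely the assertion that $M_1$ is torsion-free over $\overline{\Fp}[t,t^{-1}]$. By the Proposition computing $\hat{M_\varepsilon}$, applied at $\varepsilon=1^{-}$, the hypothesis $\hat{M_{1^{-}}}=0$ says $\varprojlim_{n\geq 1}M_1/(t^{n!}-1)M_1=0$.

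The core step is to show $(t-\mu)M_1=M_1$ for every $\mu\in\overline{\Fp}^{\times}$. Fix $\mu$ and let $n_0$ be its multiplicative order, which is finite and prime to $p$. For $n$ large enough $n_0\mid n!$; writing $n!=p^{a}m$ with $\gcd(p,m)=1$ one has $t^{n!}-1=(t^{m}-1)^{p^{a}}$, and as $n_0\mid m$ and $t^{m}-1$ has simple roots in $\overline{\Fp}$ we get $(t-\mu)\mid t^{m}-1\mid t^{n!}-1$, i.e.\ $(t^{n!}-1)\subseteq(t-\mu)$ as ideals. This gives a surjection $M_1/(t^{n!}-1)M_1\twoheadrightarrow M_1/(t-\mu)M_1$. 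The transition maps of the tower $\{M_1/(t^{n!}-1)M_1\}_{n\geq 1}$ are surjective because $n!\mid(n+1)!$, so by Mittag--Leffler the projection from $\hat{M_{1^{-}}}=\varprojlim_{n}M_1/(t^{n!}-1)M_1$ onto each term is surjective; composing, $\hat{M_{1^{-}}}$ surjects onto $M_1/(t-\mu)M_1$, which therefore vanishes.

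Finally I would assemble the conclusion. For each $\mu\in\overline{\Fp}^{\times}$, multiplication by $t-\mu$ on $M_1$ is injective by torsion-freeness and surjective by the previous step, hence bijective; since every nonzero element of $\overline{\Fp}[t,t^{-1}]$ is a unit times a product of factors $t-\mu$, it too acts bijectively on $M_1$, so the $\overline{\Fp}[t,t^{-1}]$-action extends to the fraction field $\overline{\Fp}(t)$ and $M_1$ is a $\overline{\Fp}(t)$-linear space. I expect the only real friction to be the reduction bookkeeping --- the isomorphism $\G/\G_i\cong\overline{\Fp}[t,t^{-1}]$, the description of the image of $u(\G)$, the factorization $t^{n!}-1=(t^{m}-1)^{p^{a}}$ over $\overline{\Fp}$, and the Mittag--Leffler surjectivity of the limit projections --- with the remainder being formal PID structure theory.
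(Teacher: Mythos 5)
Your proposal is correct and takes essentially the same route as the paper: the paper's proof writes the four-term exact sequence for $t-\lambda$ on $M_1$ and asserts $\ker\left(t-\lambda\right)=\coker\left(t-\lambda\right)=0$, which is precisely what you derive from the torsion hypothesis (injectivity) and from $\hat{M_{1^{-}}}=0$ via the surjective-transition-map argument (surjectivity), before extending the action to $\overline{\Fp}\left(t\right)$. Your write-up merely supplies the reduction and Mittag--Leffler bookkeeping that the paper leaves implicit.
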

\begin{proof}
	Suppose $M_1\ne 0$, for any $\lambda \in \overline{\Fp}$, there exists a exact sequence:
	$$
	\begin{tikzcd}
	&0\arrow[r] &\ker\left(t-\lambda\right)\arrow[r] &M_1 \arrow[r,"t-\lambda"] &M_1\arrow[r]  &\coker\left(t-\lambda\right)\arrow[r]  &0.
	\end{tikzcd}
	$$
	We have: 
	$$
	\ker\left(t-\lambda\right)=\coker\left(t-\lambda\right)=0.
	$$
	So $t-\lambda$ is invertible. Hence $M_1$ is a $\overline{\Fp}\left ( t \right ) $-linear space.
\end{proof}
\begin{defn}
    We call $M_{\infty}=\overline{\Fp}\left ( t \right )\underset{\G}{\otimes}M$ \textbf{infinite place of spectrum decomposition}. Let $I$ be a open ideal of $\G$. We call $\Pi_{I}(M_{\varepsilon}),\pi_{I}(M_{\varepsilon})$ \textbf{compact place of spectrum decomposition}. We call set:
    $$
    \Sp(M)=\left \{ M_{\infty};M_{\varepsilon ,\lambda,tor};M_{\varepsilon,\lambda},\forall \varepsilon\in \mathfrak{m}_p^{+},\lambda \in T\left ( \mathcal{O}_p^{\times} \right )\right \}  
    $$
    \textbf{the spectrum decomposition of $M$}. \footnote{Noted that the spectrum decomposition of $M$ has a correspondence to $\Berk \G$, which is the spectrum of $\G$ in Berkovich sense.}
\end{defn}
\begin{thm}(completeness of spectrum)
	Suppose there exist $\psi \in M$ such that\footnote{$\psi$ can be understood as a wave function.} $\left |\psi\right |=1 $, then $\Sp(M)$ is not $\left \{0\right \}$.
\end{thm}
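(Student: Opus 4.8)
The plan is to argue by contradiction: I assume that every module listed in $\Sp(M)$ is zero and derive that the infinite place $M_{\infty}$ is nevertheless nonzero. The role of the hypothesis $\left|\psi\right|=1$ is to guarantee that the reduction $M_1=M/M_i$ does not vanish. Indeed, any element of $M_i=\bigcup_{\varepsilon\in\mathfrak{m}_p}\varepsilon M$ is of the form $\varepsilon m$ with $\varepsilon\in\mathfrak{m}_p$ and $m\in M$, so its norm is $\le\left|\varepsilon\right|_p\left|m\right|_M\le\left|\varepsilon\right|_p<1$ ($M$ being a bounded strict $\G$-module); hence $\psi\notin M_i$ and its image in $M_1$ is nonzero, so $M_1\ne 0$. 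Thus the whole statement reduces to showing that, once all the compact places vanish, $M_{\infty}$ recovers the nonzero module $M_1$.

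First I would translate the vanishing of $\Sp(M)$ into the hypothesis of the earlier proposition on $\overline{\Fp}(t)$-linearity. Assume $M_{\varepsilon,\lambda,tor}=0$ and $M_{\varepsilon,\lambda}=0$ for all $\varepsilon\in\mathfrak{m}_p^{+}$ and all $\lambda\in T(\mathcal{O}_p^{\times})$. The proposition computing $\varinjlim_{n\ge 1}\Pi_{I_{\varepsilon,n}}(M_{\varepsilon})$ then gives $M_{\varepsilon,tor}=\bigoplus_{\lambda}M_{\varepsilon,\lambda,tor}=0$, and the proposition computing $\varprojlim_{n\ge 1}\pi_{I_{\varepsilon,n}}(M_{\varepsilon})$ gives $\hat{M_{\varepsilon}}=\prod_{\lambda}M_{\varepsilon,\lambda}=0$, for every $\varepsilon\in\mathfrak{m}_p^{+}$, in particular for $\varepsilon=1^{-}$ so that $M_1$ itself is covered. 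This is precisely the hypothesis under which the earlier proposition asserts that $M_1=M/M_i$ is a $\overline{\Fp}(t)$-linear space, via the exact sequences forcing $\ker(t-\lambda)=\coker(t-\lambda)=0$ on $M_1$ for every $\lambda\in\overline{\Fp}$ together with the invertibility of $t$.

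Next I would identify $M_{\infty}$ with $M_1$. Factoring the structure map through $\G\to\overline{\Fp}\left[t,t^{-1}\right]\to\overline{\Fp}(t)$, and using $\overline{\Fp}\left[t,t^{-1}\right]=\G/\mathfrak{m}_p\G$ together with $\mathfrak{m}_p\G\cdot M=M_i$, one gets
\[
M_{\infty}=\overline{\Fp}(t)\underset{\G}{\otimes}M=\overline{\Fp}(t)\underset{\overline{\Fp}\left[t,t^{-1}\right]}{\otimes}\Bigl(\overline{\Fp}\left[t,t^{-1}\right]\underset{\G}{\otimes}M\Bigr)=\overline{\Fp}(t)\underset{\overline{\Fp}\left[t,t^{-1}\right]}{\otimes}M_1 .
\]
Since $M_1$ is already a $\overline{\Fp}(t)$-vector space, the canonical localization map $M_1\to\overline{\Fp}(t)\underset{\overline{\Fp}[t,t^{-1}]}{\otimes}M_1$ is an isomorphism, whence $M_{\infty}\cong M_1\ne 0$. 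But $M_{\infty}\in\Sp(M)$, which by assumption equals $\{0\}$, a contradiction. Therefore not every member of $\Sp(M)$ can vanish, i.e.\ $\Sp(M)\ne\{0\}$.

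I expect the main obstacle to be not a single estimate but the bookkeeping forced by the degenerate situation of the remark above, where the integral ring $R$ of the completed fraction field of $\G$ satisfies $R_{\varepsilon,tor}=\hat{R_{\varepsilon}}=0$ for all $\varepsilon$ and yet has nonzero infinite place: the argument must check that the case $\varepsilon=1^{-}$ is genuinely included in the hypothesis of the $\overline{\Fp}(t)$-linearity proposition, that $\mathfrak{m}_p\G\cdot M=M_i$ so that the reduction $M_1$ really equals the algebraic base change $\overline{\Fp}\left[t,t^{-1}\right]\otimes_{\G}M$, and that the tensor product in the definition of $M_{\infty}$ is taken algebraically (no completion), so that the identification $M_{\infty}\cong M_1$ is legitimate. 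Once these points are pinned down, what remains is a one-line diagram chase.
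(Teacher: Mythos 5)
Your proof is correct and follows the same route as the paper's own argument: the hypothesis $\left|\psi\right|=1$ gives $M_1\ne 0$, the vanishing of all compact places places you in the hypothesis of the proposition asserting that $M_1$ is an $\overline{\Fp}(t)$-linear space, and the resulting nonvanishing of $M_{\infty}$ is the contradiction. You in fact supply the details the paper's two-line proof leaves implicit (why $\psi\notin M_i$, why the hypothesis of that proposition is met, and why $M_{\infty}\cong M_1$).
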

\begin{proof}
	We have:
	$$
	M_1 \ne 0.
	$$
	Suppose $\Sp(M)$ is $\left \{0\right \}$, then $M_1$ is a $\overline{\Fp}\left ( t \right ) $-linear space, which leads to a contradiction.
\end{proof}

\section{The Galois theory of operators}
The topological property of $p$-adic unitary operator is related to $\Q_p^{\times}$. More precisely, the abelian extension theory of $\Q_p$. In local class field theory, the pro-finite completion of $\Q_p^{\times}$ corresponds to $\Gal\left (\Q_p^{ab}\mid \Q_p\right )$. There exists a local \textbf{Artin morphism} 
$$\theta: \Q_p^{\times}\to \Gal\left (\Q_p^{ab}\mid \Q_p\right )$$
such that the following diagram commutes:
$$
\begin{tikzcd}
0 \arrow[r] & \Z_p^{\times} \arrow[r]\arrow[d,"\simeq"] & \Q_p^{\times}\arrow[r]\arrow[d,hookrightarrow,"\theta"] & \Z \arrow[r]\arrow[d,hookrightarrow] &0\\
0 \arrow[r] & \Gal\left (\Q_p^{ab}\mid \Q_p^{ur}\right ) \arrow[r] & \Gal\left (\Q_p^{ab}\mid \Q_p\right ) \arrow[r] & \Gal\left (\Q_p^{ur}\mid \Q_p\right ) \arrow[r] & 0
\end{tikzcd}
$$
where $\theta$ is an almost isomorphism.
In this section, we are about to talk about the Galois theory of operators.
\begin{prop}
	We define the following group:
	\begin{align*}
	B\left ( 1 \right )&=\left \{ x\in \mathcal{O}_{p}^{\times} \;\middle|\;\  \left | x-1 \right |_p <1\right \} \\
    T\left ( \mathcal{O}_p^{\times} \right )&=\left \{ x\in \mathcal{O}_p^{\times}  \;\middle|\; \exists k\in \N,x^{p^{k}}=x\right \}.
    \end{align*}
    Let $\sigma$ be the Frobenius map:
	\begin{align*}
	\sigma:\mathcal{O}_{p}^{\times}& \to \mathcal{O}_{p}^{\times} \\
    x& \mapsto x^p.
	\end{align*}
	Let $x\in \mathcal{O}_{p}^{\times}$, we have:
	\begin{align*}
	\lim_{n\to \infty}\sigma^{n!}(x) \to 1 & \Leftrightarrow x \in B\left ( 1 \right )\\
	\lim_{n\to \infty}\sigma^{n!}(x) \to x & \Leftrightarrow x \in T\left ( \mathcal{O}_p^{\times} \right ).
    \end{align*}
    Let $x\in \CC_{p}$, we have:
    \begin{align*}
    \lim_{n\to \infty}\sigma^{n!}(x) \textup{\ converges} &\Leftrightarrow x\in \mathcal{O}_{p}\\
    \lim_{n\to \infty}x^{n!} \to 1 &\Leftrightarrow x\in \mathcal{O}_{p}^{\times}\\.
    \end{align*}
    Finally, we have:
    \begin{align*}
    \mathcal{O}_{p}^{\times}&\simeq B\left ( 1 \right ) \times T\left ( \mathcal{O}_p^{\times} \right )\\
    x &\mapsto \left (\frac{x}{\underset{n\to \infty}{\lim}\sigma^{n!}(x) },\lim_{n\to \infty}\sigma^{n!}(x)\right ).
    \end{align*}
\end{prop}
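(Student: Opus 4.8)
The plan is to establish each claimed limit by working inside the local structure of $\mathcal{O}_p^{\times}$, exploiting the standard short exact sequence
\[
1 \longrightarrow B(1) \longrightarrow \mathcal{O}_p^{\times} \longrightarrow \overline{\Fp}^{\times} \longrightarrow 1,
\]
where the quotient map is reduction mod $\mathfrak{m}_p$. The key arithmetic input is that $\overline{\Fp}^{\times}$ is a torsion abelian group in which every element has order prime to $p$, so for each $\bar x$ there is some $k$ with $\bar x^{p^k}=\bar x$, and hence $n!$-th Frobenius powers stabilize the Teichmüller representative. First I would record the structure: write $x = \omega(\bar x)\cdot u$ with $\omega(\bar x)$ the Teichmüller lift and $u \in B(1)$; this is legitimate precisely because $p\ge 3$ (or $p=2$ with the $\{\pm1\}$ subtlety absorbed into $T(\mathcal{O}_p^{\times})$) makes $B(1)$ a uniquely divisible-by-torsion-prime pro-$p$ group, so the splitting $\mathcal{O}_p^{\times}\simeq B(1)\times T(\mathcal{O}_p^{\times})$ is forced.

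Next I would handle the two characterizations on $\mathcal{O}_p^{\times}$. For $x\in B(1)$: write $x = 1+y$ with $|y|_p<1$; then $\sigma^{m}(x) = x^{p^m} = (1+y)^{p^m}$, and a direct ultrametric estimate on the binomial expansion gives $|x^{p^m}-1|_p \le \max(|p^m y|_p, |y|_p^{p^m}) \to 0$, so $\sigma^{n!}(x)\to 1$; taking $n!$ rather than $n$ is harmless since the sequence already converges. Conversely, if $\sigma^{n!}(x)\to 1$ then the reduction satisfies $\bar x^{(n!)} \to \bar 1$ in the discrete group $\overline{\Fp}^{\times}$, forcing $\bar x = \bar 1$ eventually, i.e. $x\in B(1)$. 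For $x\in T(\mathcal{O}_p^{\times})$: by definition $x^{p^k}=x$ for some $k$, so $x^{p^{n!}} = x$ for all $n\ge k$ (since $k\mid n!$), giving $\sigma^{n!}(x)\to x$; conversely if $\sigma^{n!}(x)\to x$ then $|x^{p^{n!}}-x|_p\to 0$, and combining with the $B(1)$-component analysis — the Teichmüller part is fixed by $\sigma^{n!}$ eventually while the $B(1)$-part tends to $1$ — forces the $B(1)$-component of $x$ to be $1$, so $x\in T(\mathcal{O}_p^{\times})$.

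For the $\CC_p$ statements I would argue by valuation. If $v_p(x)\ne 0$ then $v_p(\sigma^{n!}(x)) = p^{n!}v_p(x)\to\pm\infty$, so $\sigma^{n!}(x)$ cannot converge; if $v_p(x)=0$, i.e. $x\in\mathcal{O}_p^{\times}$, convergence follows from the $B(1)$/$T(\mathcal{O}_p^{\times})$ decomposition just established (the Teichmüller part is eventually constant, the principal-unit part converges to $1$, so the product converges). This gives $\lim_n\sigma^{n!}(x)$ converges $\iff x\in\mathcal{O}_p$ — with the caveat that one should also note $x=0$ trivially converges, so the statement is really $x\in\mathcal{O}_p$ as written. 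The claim $x^{n!}\to 1 \iff x\in\mathcal{O}_p^{\times}$ is immediate: if $|x|_p>1$ then $|x^{n!}|_p\to\infty$, if $|x|_p<1$ then $|x^{n!}|_p\to 0\ne|1|_p$, and if $|x|_p=1$ then $x^{n!}\to 1$ since in the torsion quotient $\overline{\Fp}^\times$ the element $\bar x$ has some finite order $d$ with $d\mid n!$ eventually, while the principal-unit part is killed in the limit as in the $B(1)$ case.

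Finally, assembling the isomorphism: the map $x\mapsto\bigl(x/\lim_n\sigma^{n!}(x),\ \lim_n\sigma^{n!}(x)\bigr)$ is well-defined on $\mathcal{O}_p^{\times}$ by the convergence result, lands in $B(1)\times T(\mathcal{O}_p^{\times})$ by the two characterizations (the second coordinate is a Teichmüller element, the first reduces to $\bar 1$), and is inverse to multiplication $B(1)\times T(\mathcal{O}_p^{\times})\to\mathcal{O}_p^{\times}$; injectivity and surjectivity are then formal. I expect the main obstacle to be the converse direction of the $T(\mathcal{O}_p^{\times})$ characterization — disentangling the Teichmüller and principal-unit contributions cleanly enough to conclude that $\sigma^{n!}(x)\to x$ \emph{forces} the principal-unit component to be trivial rather than merely small — and the $p=2$ edge case, which I would either exclude (consistent with the $p\ge 3$ hypothesis appearing in Theorem~2 of the excerpt) or handle by folding $\{\pm 1\}$ into $T(\mathcal{O}_2^{\times})$ and noting $B(1)$ should then be taken as $\{x : |x-1|_2 < 1/2\}$.
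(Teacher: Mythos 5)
The paper states this proposition without any proof, so there is nothing to compare against line by line; your route --- reduce mod $\mathfrak{m}_p$ to the torsion group $\overline{\Fp}^{\times}$, use that every residue has finite order prime to $p$, and split $\mathcal{O}_p^{\times}=B(1)\times T(\mathcal{O}_p^{\times})$ via the Teichm\"uller lift --- is the standard argument and almost certainly the intended one. The two characterizations on $\mathcal{O}_p^{\times}$, the claim $x^{n!}\to 1$ for units, and the assembly of the isomorphism are all handled correctly; the converse of the Teichm\"uller characterization that you flagged as the main obstacle is in fact unproblematic once the splitting is in place, since $\sigma^{n!}(\omega u)\to\omega$ forces $u=1$ exactly, not merely approximately.

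There are, however, two concrete slips. First, in the $\CC_p$ statement you assert that $v_p(x)\neq 0$ implies $v_p(\sigma^{n!}(x))=p^{n!}v_p(x)\to\pm\infty$ and hence non-convergence. For $v_p(x)>0$ this is backwards: $v_p(\sigma^{n!}(x))\to+\infty$ means $\sigma^{n!}(x)\to 0$, which \emph{does} converge --- and this is precisely what the ``$\Leftarrow$'' direction of the equivalence needs for $x\in\mathfrak{m}_p\setminus\{0\}$, since the proposition claims convergence for all of $\mathcal{O}_p$, not just $\mathcal{O}_p^{\times}\cup\{0\}$. As written your argument would disprove the statement on $\mathfrak{m}_p\setminus\{0\}$; the fix is one line (only $v_p(x)<0$ gives divergence). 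Second, your proposed $p=2$ repair --- shrinking $B(1)$ to $\left\{x : \left|x-1\right|_2<1/2\right\}$ --- would actually break the product decomposition, since $-1$ would then lie in neither factor ($T(\mathcal{O}_2^{\times})$ consists of roots of unity of odd order). No repair is needed: with $B(1)$ defined as the full kernel of reduction, $-1\in B(1)$ and $\sigma^{n!}(-1)=(-1)^{2^{n!}}=1\to 1$, so the proposition holds verbatim for $p=2$; the $\{\pm 1\}$ subtlety you have in mind concerns the pro-$2$ structure of $1+2\Z_2$, which is not used here.
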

\begin{defn}
Let $M\in Obj(\GB)$, suppose there exists $x\in M,\left |x\right |=1$. Let $U$ be a $p$-adic unitary operator on $M$, $\sigma:U\mapsto U^p$. We say $U$ is \textbf{continuous type} if the strong operator limit of $\left \{ \sigma^{n!}(U)\right \}_{n\in \N}$ converges to identity operator:
$$
s-\lim_{n\to \infty}\sigma^{n!}(U) \to 1.
$$
We say $U$ is \textbf{Teichmüller type} if the strong operator limit of $\left \{ \sigma^{n!}(U)\right \}_{n\in \N}$ converges to $U$.
$$
s-\lim_{n\to \infty}\sigma^{n!}(U) \to U.
$$
We say $U$ is \textbf{pro-finite type} if the strong operator limit of $\left \{ U^{n!}\right \}_{n\in \N}$ converges to identity operator:
$$
s-\lim_{n\to \infty}U^{n!} \to 1.
$$
Moreover, $U$ is \textbf{pro-finite type} if and only if:
$$M_{\varepsilon,tor}=M_{\varepsilon}$$
\end{defn}
The following theorem follows from \cite{NAUO}. We do not need the normal property of $p$-adic unitary operators.
\begin{thm}($p$-adic Stone's theorem)
$U$ is a $p$-adic unitary operator of \textbf{continuous type} if and only if the one-parameter unitary group $\left \{ U^{t}\right \}_{t\in \Z_p}$ can be well defined.

Moreover, there exists a $\Z_p^{\times}\simeq \Gal\left (\Q_p^{ab}\mid \Q_p^{ur}\right ) $ action on the one-parameter unitary group $\left \{ U^{t}\right \}_{t\in \Z_p}$, where $\Q_p^{ab}$ is the maximum abelian field extension of $\Q_p$, $\Q_p^{ur}$ is the maximum unramified extension of $\Q_p$.
\end{thm}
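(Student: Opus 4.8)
The statement breaks into the easy implication, the construction of $\{U^t\}_{t\in\Z_p}$, and the ``moreover''. Throughout, condition $3$ in the definition of a $p$-adic unitary operator forces every integer power $U^n$ to be an isometry of $(X,\left|-\right|_X)$, since $\left|x\right|_X=\left|U^{-n}U^nx\right|_X\le\left|U^nx\right|_X\le\left|x\right|_X$. By a ``well-defined one-parameter group'' I mean a continuous homomorphism $\phi$ from the additive group $\Z_p$ to the group of $p$-adic unitary operators on $X$ (strong operator topology on the target) with $\phi(n)=U^n$ for all $n\in\Z$; since $\Z$ is dense in $\Z_p$, such a $\phi$ is unique if it exists, and I write $U^t:=\phi(t)$. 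The implication ``well defined $\Rightarrow$ continuous type'' is then immediate: $v_p(p^{n!})=n!\to\infty$, so $p^{n!}\to 0$ in $\Z_p$, and continuity of $\phi$ at $0$ gives $\sigma^{n!}(U)=U^{p^{n!}}=\phi(p^{n!})\to\phi(0)=1$ in the strong operator topology, i.e.\ $U$ is of continuous type.

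For the converse the engine is the following estimate, valid for any $x\in X$, any $m\ge 1$ and any $s\in p^m\Z$: writing $s=p^mk$ and telescoping, $U^sx-x=\sum_{j=1}^{k}(U^{p^m})^{j-1}\bigl(U^{p^m}x-x\bigr)$ for $k\ge 1$ (and the analogous identity with $U^{-p^m}$ when $k<0$, using $\left|U^{-p^m}x-x\right|_X=\left|U^{p^m}x-x\right|_X$), so by the ultrametric inequality and the fact that each $(U^{p^m})^{j-1}$ is an isometry,
$$
\left|U^sx-x\right|_X\ \le\ \max_{1\le j\le k}\left|(U^{p^m})^{j-1}\bigl(U^{p^m}x-x\bigr)\right|_X\ =\ \left|U^{p^m}x-x\right|_X .
$$
Now suppose $U$ is of continuous type. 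Given $x\in X$ and $\delta>0$, choose $n_0$ with $\left|U^{p^{n!}}x-x\right|_X<\delta$ for all $n\ge n_0$ and put $m_0=n_0!$; since every $s\in p^{m_0}\Z$ lies in $p^{n_0!}\Z$, the estimate gives $\left|U^sx-x\right|_X<\delta$ for all $s\in p^{m_0}\Z$. Thus $U^s\to 1$ strongly as $s\to 0$ through $\Z$.

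With this in hand the construction is routine. For $t\in\Z_p$ pick $t_k\in\Z$ with $t_k\to t$; then $\left|U^{t_k}x-U^{t_\ell}x\right|_X=\left|U^{t_k-t_\ell}x-x\right|_X\to 0$ by the previous step, so $(U^{t_k}x)$ is Cauchy in the complete space $X$ and we set $U^tx:=\lim_k U^{t_k}x$. The same bound shows the limit is independent of the chosen sequence, so $U^t$ is well defined; it is a $p$-adic unitary operator because $\left|U^tx\right|_X=\lim_k\left|U^{t_k}x\right|_X=\left|x\right|_X$ and $U^{-t}$ inverts it. Passing to the limit in $U^{s_k+t_k}x=U^{s_k}\bigl(U^{t_k}x\bigr)$, using strong continuity once more for the cross term, yields $U^{s+t}=U^sU^t$ and $U^0=\mathrm{Id}$; and $\left|U^{t^{(j)}}x-U^tx\right|_X=\left|U^{t^{(j)}-t}x-x\right|_X\to 0$ whenever $t^{(j)}\to t$, so $t\mapsto U^t$ is continuous. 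This is the one-parameter group, unique by density; this part follows \cite{NAUO}, and no normality of $U$ is used.

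For the ``moreover'': each $u\in\Z_p^\times$ is an automorphism $t\mapsto ut$ of the additive group $\Z_p$, and reparametrising we set $u\cdot\{U^t\}_{t\in\Z_p}:=\{U^{ut}\}_{t\in\Z_p}$. This is again a one-parameter group of the same kind, because $U^u=\phi(u)$ is of continuous type — indeed $(U^u)^{p^{n!}}=\phi(u\,p^{n!})\to 1$ since $v_p(u\,p^{n!})=n!\to\infty$ — so by uniqueness $t\mapsto U^{ut}$ is precisely the one-parameter group generated by $U^u$; the action axioms $(uv)\cdot(-)=u\cdot(v\cdot(-))$ and $1\cdot(-)=\mathrm{id}$ follow from $\phi(uvt)=\phi(u(vt))$. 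Finally, the left vertical isomorphism $\Z_p^\times\simeq\Gal(\Q_p^{ab}\mid\Q_p^{ur})$ of the Artin diagram (the restriction of the local Artin map $\theta$ to the units, identifying $\Z_p^\times$ with the inertia subgroup) is local class field theory; transporting the reparametrisation action along it produces the asserted $\Gal(\Q_p^{ab}\mid\Q_p^{ur})$-action, which is disjoint from the Frobenius $\sigma\colon U\mapsto U^p$ (a lift of Frobenius in the quotient $\Gal(\Q_p^{ur}\mid\Q_p)$), consistent with ``continuous type'' being the principal-unit part of the classification. I expect the only genuinely non-formal point to be the estimate of the second paragraph: the propagation of a single bound $\left|U^{p^m}x-x\right|_X<\delta$ to all $s\in p^m\Z$ via ultrametricity and the isometry of the whole $\Z$-orbit of $x$, together with the cofinality of $\{p^{n!}\}_n$ among powers of $p$ under divisibility. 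Everything downstream is limit-chasing, and the Galois statement is bookkeeping plus the cited reciprocity.
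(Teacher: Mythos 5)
Your proposal is correct and follows essentially the same route as the paper: identify "continuous type" with $p$-adic continuity of $n\mapsto U^n$ at $0$, extend to $\Z_p$ by density and completeness, and realize the $\Gal\left(\Q_p^{ab}\mid\Q_p^{ur}\right)\simeq\Z_p^{\times}$ action as the reparametrization $U\mapsto U^{\alpha}$. The only difference is that you supply the telescoping/ultrametric estimate $\left|U^{s}x-x\right|_X\le\left|U^{p^m}x-x\right|_X$ for $s\in p^m\Z$, which the paper asserts implicitly and delegates to its citation; this is a welcome addition rather than a divergence.
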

\begin{proof}
The definition of continuous type $p$-adic unitary operator is equivalent to the $p$-adic continuous property of group $\left \{ U^{n}\right \}_{n\in \Z}$ at $n=0$. So $\left \{ U^{n}\right \}_{n\in \Z}$ can be uniquely extended to $\left \{ U^{t}\right \}_{t\in \Z_p}$ with the following diagram commutes:
$$
\begin{tikzcd}
\Z \arrow[r] \arrow[d]& \left \{ U^{n}\right \}_{n\in \Z} \arrow[d]\\
\Z_p \arrow[r] & \left \{ U^{t}\right \}_{t\in \Z_p}
\end{tikzcd}
$$
Let $\left \{ U^{t}\right \}_{t\in \Z_p}$ be any one-parameter unitary group, $\alpha \in \Z_p^{\times}$, we can define:
\begin{align*}
    \phi_\alpha :\left \{ U^{t}\right \}_{t\in \Z_p} &\overset{\simeq}{\longrightarrow} \left \{ U^{t}\right \}_{t\in \Z_p}\\
	U &\mapsto U^{\alpha}.
\end{align*}
 $\left \{\phi_\alpha,\alpha \in \Z_p^{\times} \right \}$ is the ismorphism group of $\left \{ U^{t}\right \}_{t\in \Z_p}$.
 
 In the case of usual one-parameter unitary group on Hilbert space, the action of $\Z_p^{\times}$ corresponds to the skew-symmetry property of generator. The Hermite conjugate gives a $\Z/2\Z \simeq \Gal\left (\CC \mid \R \right )$-action.
\end{proof}
\begin{thm}(Spectral decomposition theorem of Teichmüller element)
	$U$ is a $p$-adic unitary operator of \textbf{Teichmüller type} if and only if there exists a spectral decomposition:
	$$
	\sum_{\lambda\in T\left ( \mathcal{O}_p^{\times} \right )}\pi_\lambda=1,\ \ \sum_{\lambda\in T\left ( \mathcal{O}_p^{\times} \right )}\lambda\pi_\lambda=U,\ \ \pi_\lambda^2=\pi_\lambda,\ \ \pi_\lambda\pi_{\lambda^*}=0(\forall \lambda \ne \lambda^*)
	$$
	the sum converges in strong operator topology.
	
	Moreover, there exists a $\hat{\Z}\simeq \Gal\left (\Q_p^{ur} \mid \Q_p \right ) $ action on $U$, where $\Q_p^{ur}$ is the maximum unramified  field extension of $\Q_p$ by joining the $T\left ( \mathcal{O}_p^{\times} \right )$.
\end{thm}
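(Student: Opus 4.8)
The plan is to work modulo each $\varepsilon\in\mathfrak{m}_p^{+}$, i.e. in $M_{\varepsilon}=M/\varepsilon M$, prove the statement there, and then assemble over the inverse system $M=\varprojlim_{\varepsilon}M_{\varepsilon}$. Since $\sigma^{n!}(U)=U^{p^{n!}}$, the Teichmüller hypothesis reads $s\text{-}\lim_{n}U^{p^{n!}}=U$. For the direction ``Teichmüller type $\Rightarrow$ spectral decomposition'' I would first observe that for every $x\in M$ and every $\varepsilon$, the relation $U^{p^{n!}}x\to Ux$ forces a single unit polynomial $g_{N}(t)=t^{p^{N!}-1}-1\in u(\G)$ to annihilate the image of $x$ in $M_{\varepsilon}$ for $N$ large (every root of $g_{N}$ is a $(p^{N!}-1)$-th root of unity, hence a Teichmüller element, because $p\nmid p^{N!}-1$). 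Therefore $M_{\varepsilon}=M_{\varepsilon,tor}$, and the Proposition computing $\varinjlim_{n}\Pi_{I_{\varepsilon,n}}(M_{\varepsilon})$ gives the \emph{algebraic} direct sum $M_{\varepsilon}=\bigoplus_{\lambda\in T(\mathcal{O}_p^{\times})}M_{\varepsilon,\lambda,tor}$, every vector lying in only finitely many summands.

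The crucial local step is to upgrade ``$U$ acts near $\lambda$ on $M_{\varepsilon,\lambda,tor}$'' to ``$U$ acts exactly by $\lambda$''. If $x\in M_{\varepsilon,\lambda,tor}$ is annihilated by $f=\prod_{j}(t-t_{j})\in u_{\lambda}(\G)$ with $|t_{j}-\lambda|_{p}<1$, then the cyclic submodule generated by $x$ is a quotient of $\G/(f)$; writing $t=\lambda+s$ there, $s$ is topologically nilpotent (the elementary symmetric functions of the $t_{j}-\lambda\in\mathfrak{m}_p$ lie in $\mathfrak{m}_p$, so $s^{d}\in\mathfrak{m}_p\cdot\G/(f)$), and the binomial expansion of $(\lambda+s)^{p^{n!}}$ together with $v_{p}\binom{p^{n!}}{i}\to\infty$ for each fixed $i\ge1$ and $\|s^{i}\|\to0$ as $i\to\infty$ shows $U^{p^{n!}}x\to\lambda x$ \emph{unconditionally}. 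Combining this with the hypothesis $U^{p^{n!}}x\to Ux$ and uniqueness of limits yields $Ux=\lambda x$; this is precisely where the Teichmüller assumption enters, since a nontrivial Jordan block with eigenvalue $\lambda$ has $U^{p^{n!}}\to\lambda\cdot\mathrm{id}\ne U$. Letting $\pi_{\lambda}^{(\varepsilon)}$ be the projection of $M_{\varepsilon}$ onto $M_{\varepsilon,\lambda,tor}$ and passing to $\varprojlim_{\varepsilon}$, I obtain $\pi_{\lambda}\in\End M$ with $\pi_{\lambda}^{2}=\pi_{\lambda}$, $\pi_{\lambda}\pi_{\lambda^{*}}=0$, and $\sum_{\lambda}\pi_{\lambda}=1$, $\sum_{\lambda}\lambda\pi_{\lambda}=U$, all sums converging in the strong operator topology because modulo each $\varepsilon$ only finitely many terms act nontrivially on a given vector.

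For the converse, orthogonality of the $\pi_{\lambda}$ gives $U^{p^{n!}}=\sum_{\lambda}\lambda^{p^{n!}}\pi_{\lambda}$; since each Teichmüller $\lambda$ is a root of unity of order $d_{\lambda}$ prime to $p$ and, modulo a fixed $\varepsilon$, a given vector meets only finitely many $\lambda$, one has $\lambda^{p^{n!}}=\lambda$ for all of these once $\operatorname{ord}_{d_{\lambda}}(p)\mid n!$, hence $\sigma^{n!}(U)x\to Ux$ and $U$ is of Teichmüller type. For the Galois statement I would recall that $\Q_p^{ur}$ is generated over $\Q_p$ by the roots of unity of order prime to $p$, that is by $T(\mathcal{O}_p^{\times})$, and that $\Gal(\Q_p^{ur}\mid\Q_p)\simeq\hat{\Z}$ is topologically generated by the arithmetic Frobenius $\phi$, which acts on those roots of unity — hence on $T(\mathcal{O}_p^{\times})$ — by $\zeta\mapsto\zeta^{p}=\sigma(\zeta)$. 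Transporting this through the eigenvalue labelling, define $a\cdot U:=\sum_{\lambda}(a\cdot\lambda)\pi_{\lambda}$ for $a\in\hat{\Z}$, where $a\cdot\lambda$ is the profinite action of $\hat{\Z}$ on $T(\mathcal{O}_p^{\times})$ (well defined since every $\sigma$-orbit is finite); concretely $\phi^{n}\cdot U=U^{p^{n}}=\sigma^{n}(U)$ and $a\cdot U=s\text{-}\lim_{n_{i}\to a}\sigma^{n_{i}}(U)$, the limit existing by the same finiteness-modulo-$\varepsilon$ argument. Each $a\cdot U$ is again a $p$-adic unitary operator of Teichmüller type with the same family $\{\pi_{\lambda}\}$, and $a\mapsto(a\cdot U)$ is a continuous $\hat{\Z}$-action, compatible through the Artin map $\theta$ with local class field theory.

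I expect the main obstacle to be the local step of the second paragraph: establishing $t^{p^{n!}}\to\lambda$ in $\G/(f)$ for every unit polynomial $f$ supported in the residue disc of $\lambda$ (the binomial/topological-nilpotence estimate), and verifying that the projections $\pi_{\lambda}^{(\varepsilon)}$ are compatible along the tower $\{M_{\varepsilon}\}$ with the required strong convergence. Once these are in place, the converse and the $\hat{\Z}$-action reduce to a dictionary translation of unramified class field theory.
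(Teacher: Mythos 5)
Your proposal is correct, and it is genuinely more self-contained than the paper: for the spectral decomposition itself the paper proves nothing here — it writes ``we refer to \cite{zth} for the proof of spectral decomposition'' and only constructs the Galois action — whereas you reassemble the decomposition from the machinery of the projection-functor section. Your route (reduce mod $\varepsilon$; observe that $\sigma^{n!}(U)x\to Ux$ forces $(t^{p^{N!}-1}-1)x\equiv 0$ in $M_{\varepsilon}$, hence $M_{\varepsilon}=M_{\varepsilon,tor}$; then split over residue discs via $\varinjlim_{n}\Pi_{I_{\varepsilon,n}}(M_{\varepsilon})=\oplus_{\lambda}M_{\varepsilon,\lambda,tor}$) is exactly the intended use of those propositions together with the orthogonality criterion relating $I_{\varepsilon,f}+I_{\varepsilon,g}=(1)$ to $res(f,g)\in\mathcal{O}_p^{\times}$. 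One simplification is available to you: since $p^{N!}-1$ is prime to $p$, the roots of $t^{p^{N!}-1}-1$ are pairwise at distance $1$, so the idempotents $P_{\zeta}$ furnished by that criterion already satisfy $(t-\zeta)P_{\zeta}x=0$ in $M_{\varepsilon}$, i.e.\ $U$ acts on $P_{\zeta}x$ by $\zeta$ on the nose; your binomial/topological-nilpotence estimate is therefore not needed for this implication. It is nevertheless correct — just make the iteration quantitative, $\|s^{kd+j}\|\le\delta^{k}$ with $\delta=\max_{j}|t_{j}-\lambda|_{p}<1$, since the bare statement $s^{d}\in\mathfrak{m}_p\cdot\G/(f)$ gives no decay on its own ($\mathfrak{m}_p^{2}=\mathfrak{m}_p$). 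The compatibility of the $\pi_{\lambda}^{(\varepsilon)}$ along the tower follows from uniqueness of the idempotent decomposition, and your converse (finitely many $\lambda$ survive mod a fixed $\varepsilon$, each of prime-to-$p$ order, so $\lambda^{p^{n!}}=\lambda$ eventually) is sound. The $\hat{\Z}$-action you construct, $g\cdot U=\sum_{\lambda}g(\lambda)\pi_{\lambda}$ with Frobenius acting by $\lambda\mapsto\lambda^{p}$ on $T\left(\mathcal{O}_p^{\times}\right)$ and convergence coming from the pro-finiteness of the orbits, is identical to the paper's construction; so on the only part the paper actually argues, you and the paper coincide.
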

\begin{proof}
	We refer to \cite{zth} for the proof of spectral decomposition.
	We only show how to construct a $\Gal\left (\Q_p^{ur} \mid \Q_p \right ) $-action on $\left \{ U^{n}\right \}_{n\in \Z}$.
	
	First, $\Gal\left (\Q_p^{ur} \mid \Q_p \right ) $ has a pro-finite group action on $T\left ( \mathcal{O}_p^{\times} \right )$. Let $g\in \Gal\left (\Q_p^{ur} \mid \Q_p \right )$, we have a group ismorphism:
	\begin{align*}
	\psi_g: T\left ( \mathcal{O}_p^{\times} \right ) &\to T\left ( \mathcal{O}_p^{\times} \right )\\
		\lambda &\mapsto g(\lambda).
	\end{align*}
	Finally, $\Gal\left (\Q_p^{ur} \mid \Q_p \right )$ has a  group action on $p$-adic unitary operator of \textbf{Teichmüller type}. The construction is:
	\begin{align*}
		U&=\sum_{\lambda\in T\left ( \mathcal{O}_p^{\times} \right )}\lambda\pi_\lambda\\
		\psi_g(U)&=\sum_{\lambda\in T\left ( \mathcal{O}_p^{\times} \right )}g(\lambda)\pi_\lambda.
	\end{align*}
    Since the $\Gal\left (\Q_p^{ur} \mid \Q_p \right )$-group action on $T\left ( \mathcal{O}_p^{\times} \right )$ is pro-finite, the sum converges in strong operator topology.
\end{proof}
\begin{thm}(Jordan decomposition theorem of pro-finite unitary operator)
	$U$ is a $p$-adic unitary operator of \textbf{pro-finite type} if and only if there exists a Jordan decomposition:
	$$
    U=U_s U_n,
	$$
	where $U_s$ is a $p$-adic unitary operator of \textbf{Teichmüller type}, $U_n$ is a $p$-adic unitary operator of \textbf{continuous type}. The Jordan decomposition of $U$ is unique.
	
	Moreover, there exists a $\hat{\Z}\times \Z_p^{\times} \simeq \Gal\left (\Q_p^{ab} \mid \Q_p \right ) $ action on $U$, where $\Q_p^{ab}$ is the maximum abelian field extension of $\Q_p$.
\end{thm}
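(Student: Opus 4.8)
The plan is to transport the decomposition $\mathcal{O}_p^{\times}\simeq B(1)\times T(\mathcal{O}_p^{\times})$ of the earlier proposition to the level of the group of operators. First I would note that, by definition, $U$ is of pro-finite type precisely when the homomorphism $n\mapsto U^{n}$ from $\Z$ to the unitary operators is continuous for the topology on $\Z$ having $\{n!\Z\}$ as a basis of neighbourhoods of $0$; hence it extends to a continuous surjection $\hat{\Z}\to\overline{\langle U\rangle}$ onto the strong-operator closure, so $G:=\overline{\langle U\rangle}$ is a pro-finite abelian topological group and a quotient of $\hat{\Z}\simeq\Z_p\times\prod_{\ell\neq p}\Z_{\ell}$. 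Writing $\hat{\Z}^{(p')}=\prod_{\ell\neq p}\Z_{\ell}$ and pushing the splitting $\hat{\Z}=\Z_p\times\hat{\Z}^{(p')}$ through this quotient yields $G=G_p\times G^{(p')}$; if $U$ is the image of $z\in\hat{\Z}$, I set $U_n$ and $U_s$ to be the images of the $\Z_p$- and $\hat{\Z}^{(p')}$-components of $z$, so $U=U_sU_n=U_nU_s$, both commuting with $U$, and both are again $p$-adic unitary operators, a strong-operator limit of operators preserving $M$ still preserving $M$.

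Next I would check the types. Since $p^{n!}\to 0$ in $\Z_p$ while $p^{n!}\to 1$ in each $\Z_{\ell}^{\times}$ for $\ell\neq p$, because $\varphi(\ell^{k})\mid n!$ for $n$ large, one has $p^{n!}\to(0,1)$ in $\hat{\Z}$; applying the quotient map, $\sigma^{n!}(U)=U^{p^{n!}}$ converges in the strong operator topology, and its limit is exactly the $\hat{\Z}^{(p')}$-component, i.e. $U_{s}=s-\lim_{n}\sigma^{n!}(U)$. The same computation gives $\sigma^{n!}(U_{s})=U_{s}^{p^{n!}}\to U_{s}$, so $U_{s}$ is of Teichmüller type, hence by the Spectral decomposition theorem of Teichmüller element it admits a decomposition $\sum_{\lambda}\lambda\pi_{\lambda}$; and $\sigma^{n!}(U_{n})=U_{n}^{p^{n!}}\to 1$, so $U_{n}=U_s^{-1}U$ is of continuous type, hence $\{U_{n}^{t}\}_{t\in\Z_p}$ is well defined by $p$-adic Stone's theorem. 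This proves existence.

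Uniqueness is formal: if $U=U_{s}'U_{n}'=U_{n}'U_{s}'$ with $U_{s}'$ of Teichmüller type and $U_{n}'$ of continuous type, then $\sigma^{n!}(U)=\sigma^{n!}(U_{s}')\,\sigma^{n!}(U_{n}')$ converges strongly to $U_{s}'\cdot 1$, so $U_{s}'=s-\lim_{n}\sigma^{n!}(U)=U_{s}$ and thus $U_{n}'=U_{s}^{-1}U=U_{n}$. The converse is equally formal: if $U=U_{s}U_{n}$ with $U_{s}$ of Teichmüller type and $U_{n}$ of continuous type and the two commuting, then $\overline{\langle U\rangle}$ embeds into $\overline{\langle U_{s}\rangle}\times\overline{\langle U_{n}\rangle}$, a product of a quotient of $\hat{\Z}^{(p')}$ and a quotient of $\Z_p$, hence a quotient of $\hat{\Z}$, so $U^{n!}\to 1$ and $U$ is of pro-finite type.

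Finally, the Galois action is assembled from the two actions already constructed: the $\hat{\Z}\simeq\Gal(\Q_p^{ur}\mid\Q_p)$-action $\psi_{g}(U_{s})=\sum_{\lambda}g(\lambda)\pi_{\lambda}$ on the Teichmüller part, and the $\Z_p^{\times}\simeq\Gal(\Q_p^{ab}\mid\Q_p^{ur})$-action $\phi_{\alpha}(U_{n})=U_{n}^{\alpha}$ on the continuous part. By the uniqueness just proved the pair $(U_{s},U_{n})$ depends canonically on $U$, so $(g,\alpha)\cdot U:=\psi_{g}(U_{s})\,\phi_{\alpha}(U_{n})$ is well defined, and it is a group action because $\psi_{g_{1}g_{2}}=\psi_{g_{1}}\psi_{g_{2}}$, $\phi_{\alpha_{1}\alpha_{2}}=\phi_{\alpha_{1}}\phi_{\alpha_{2}}$, and the two families commute since they act on the two different factors of $G=G^{(p')}\times G_p$. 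Under the local Artin isomorphism recalled at the start of this section, $\Q_p^{\times}\cong\Z\times\Z_p^{\times}$ has pro-finite completion $\hat{\Z}\times\Z_p^{\times}\simeq\Gal(\Q_p^{ab}\mid\Q_p)$, which is the asserted symmetry group. The main obstacle is the very first step: establishing rigorously that pro-finite type forces $\overline{\langle U\rangle}$ to be a quotient of $\hat{\Z}$, that the canonical splitting of $\hat{\Z}$ descends to it, and that the strong-operator limits defining $U_{s}$ and $U_{n}$ exist and land among $p$-adic unitary operators of the stated types; once this structural fact is secured, existence, uniqueness, the converse, and the Galois action all follow by the soft arguments indicated above.
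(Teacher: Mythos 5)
Your decomposition agrees with the paper's: both define $U_s$ as the strong operator limit of $\sigma^{n!}(U)$ and set $U_n=UU_s^{-1}$, and both assemble the $\Gal\left(\Q_p^{ab}\mid\Q_p\right)$-action as the product of the $\hat{\Z}$-action on the Teichm\"uller part and the $\Z_p^{\times}$-action on the continuous part. Where you differ is in how convergence is secured. The paper reduces modulo $\varepsilon$: on $M_\varepsilon=M/\varepsilon M$ each vector is fixed by some finite power of $U_\varepsilon$, so $\sigma^{n!}(U_\varepsilon)$ stabilizes pointwise (mirroring the element-level splitting $\mathcal{O}_p^{\times}\simeq B(1)\times T(\mathcal{O}_p^{\times})$), and then passes to $M=\varprojlim_{\varepsilon} M_\varepsilon$. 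You instead extend $n\mapsto U^{n}$ continuously to $\hat{\Z}$ and push the ring splitting $\hat{\Z}\simeq\Z_p\times\prod_{\ell\neq p}\Z_{\ell}$ through it, reading off $U_s$ and $U_n$ as the images of the two idempotents via $p^{n!}\to(0,1)$. Your route buys more: uniqueness (under the hypothesis that $U_s$ and $U_n$ commute, which should be made explicit in the statement) and the converse implication fall out formally, and neither is actually proved in the paper. The step you flag as the main obstacle is genuinely the only thing to check, and it is short: the definition only gives $U^{n!}m\to m$ along the sequence $n!$, but since $U$ is an isometry and the norm is ultrametric, $\left|U^{n!}m-m\right|\le\delta$ forces $\left|U^{kn!}m-m\right|\le\delta$ for all $k\ge1$, so the whole coset lattice $n!\Z$ enters any prescribed neighbourhood and $U^{z}$ exists for every $z\in\hat{\Z}$ by completeness. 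With that in place your argument closes; in the converse direction you could replace the slightly vague claim that $\overline{\langle U_s\rangle}$ is a quotient of $\prod_{\ell\neq p}\Z_{\ell}$ by a direct computation from the spectral decomposition $U_s=\sum_{\lambda}\lambda\pi_{\lambda}$, each $\lambda$ having finite prime-to-$p$ order, so that $U_s^{n!}\to1$ strongly.
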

\begin{proof}
 Recall that $\mathcal{O}_{p}^{\times}\simeq B\left ( 1 \right ) \times T\left ( \mathcal{O}_p^{\times} \right )$. For any $x \in {(\mathcal{O}_{p}/\varepsilon)}^{\times}$, $x$ has a finite order:
 $$
 \exists n(x)\in \N, x^{n(x)}=1
 $$
 The Jordan decomposition of $x$ is uniquely defined by the limit:
 $$
 x_s=\lim_{n\to \infty}\sigma^{n!}(x),\ x_n=\frac{x}{x_s}
 $$
 From the definition of pro-finite type $p$-adic unitary operator, for any $m$ in $M$, we have:
 $$
 \lim_{k\to \infty}U^{k!}m= m.
 $$
 Let $\varepsilon \in \mathcal{O}_p-\left \{0\right \}$, $m_{\varepsilon} \in M_{\varepsilon}$ be the reduction of $m$, we have:
 $$
 \exists n \in \N,\ U_{\varepsilon}^{n}m_{\varepsilon}= m_{\varepsilon}.
 $$
 The Jordan decomposition of $U_{\varepsilon}$ is defined by:
 $$
 U_{\varepsilon,s}\left ( m_{\varepsilon} \right ) =\lim_{n\to \infty}\sigma^{n!}(U_{\varepsilon})\left ( m_{\varepsilon} \right ) ,\ U_{\varepsilon,n}=\frac{U_{\varepsilon}}{U_{\varepsilon,s}}.
 $$
 
Since we have: 
$$
M= \varprojlim_{\varepsilon\in \mathfrak{m}_p^{+}} M_{\varepsilon}
$$

The limit $\lim_{n\to \infty}\sigma^{n!}(U)\left ( m\right )$ converges. Let's define:
$$
    U_{s}\left ( m\right )=\lim_{n\to \infty}\sigma^{n!}(U)\left ( m\right ),\ U_{n}=\frac{U}{U_{s}}
$$

The Galois group $\Gal\left (\Q_p^{ab} \mid \Q_p \right )$ action on $U$ is defined by the product:
\begin{align*}
\Gal\left (\Q_p^{ab} \mid \Q_p \right )&\simeq \hat{\Z}\times \Z_p\\
\left (g,h\right )(U)&=g(U_s)h(U_n).
\end{align*}
\end{proof}
Let $\sigma \in \Gal\left(\CC_p\mid \Q_p\right)$. The categorical explanation is that the Galois group acts on the set of open ideal of $\G$, which is induced by the following diagram:
$$
\begin{tikzcd}
	\G \arrow[r,"\sigma"]& \G \\
	\sum_{k\in \Z}a_k t^k  \arrow[r,mapsto,"\sigma"] & \sum_{k\in \Z}\sigma\left(a_k\right)t^k \\
	\Comp\G  & \Comp\G \arrow[l,swap,"\sigma^{-1}"].
\end{tikzcd}
$$

Moreover, every isomorphism $f$ of $\G$ induce a bijective map: 
$$
f^*: \Comp \G \rightarrow \Comp \G
$$
The involution $i: t\to t^{-1}$ and the rotation $r: t \to \alpha t,\left | \alpha  \right |=1$ can be realized as the geometric symmetry on $\Comp \G$.

In fact, the definition of Hermite/unitary operator highly depends on the structure of field. The definition of inner product relies on the positive property of square in $\R$, which is important in quantum theory. The Galois group can be understood as a symmetry of operators.

\begin{rem}
	
	There exists a measure theory on any pro-finite group $G$. For any open-closed normal subgroup $N$ of $G$, define the volume of $N$ by:
$$
	m\left (N \right )= \frac{1}{\left |G/N \right | }.
$$

    Let $L$ be a Galois extension of field $K$, then $\Gal\left ( L\mid K \right )$ has a measure theory for any finite sub Galois extension $M$ of $K$. We can define the volume of $\Gal\left ( L\mid M \right )$ by:
    $$
    m\left (\Gal\left ( L\mid M \right )\right )= \frac{1}{\left |M/K \right | }=\frac{1}{\left |\Gal\left ( M\mid K\right ) \right |}.
    $$
    Let $c+d\Z \subset \Z,d \in \Z-\left \{0\right \}  ,c\in \Z$, define the volume of $c+d\Z$ by:
    $$
    m\left (c+d\Z \right )=\frac{1}{\left |d\right | }.
    $$
    The volume of $c+d\Z$ can be viewed as the measure theory on pro-finite group $\hat{\Z}$.
\end{rem}
\section{Examples}
\begin{ex}(The spectral theory of $\G$ act on itself)
	Let $\G$ act on itself. We have: 
	$$
	\G=\mathcal{O}_p\left<t,t^{-1}\right>\simeq c_{0}\left(\Z\right),
	$$
	where $t$ is a shift operator on $c_{0}\left(\Z\right)$. Let $c+d\Z$ be a arithmetic sequence. We can define the sum $S_{c+d\Z}$, which is a "integral of $c+d\Z$":
	$$
	f\left( t\right) =\sum_{n=-\infty}^{\infty}a_nt^n,
	S_{c+d\Z}\left( f\left( t\right) \right)= \sum_{n=-\infty}^{\infty}a_{c+dn}.
	$$
	We have:
	$$
	S_{c+d\Z}\left(f\left( t\right)\right) =S_{c+d\Z}\left(t^df\left( t\right) \right).
	$$
	Moreover:
	$$
	S_{c+d\Z}=\sum_{c^{*}=1}^{d^{*}}S_{c+c^{*}d+dd^{*}\Z}
	$$
	which is the addictive property of integral. Finally, it is equivalent to use the right projection functor:
	$$
	\G/\left(t^d-1\right) \otimes \mathcal{O}_p\left<t,t^{-1}\right>\simeq \mathcal{O}_p\left[t\right]/\left(t^d-1\right)\simeq \mathcal{O}_p^{d},
	$$
	where the coefficient corresponds to $S_{c+d\Z}$. We have:
	$$
	f\left(t\right) =0 \Leftrightarrow S_{c+d\Z}\left(f\left(t\right) \right) =0, \forall c\in \Z,d \in \Z-\left \{0\right \} .
	$$
\end{ex}
\begin{ex}(Spectrum shift operator)
	Let $X$ be a $p$-adic Hermite operator act on $\CC_p$-ultrametric Banach space $V$. Suppose the spectrum of $X$ lies in $\Z_p$. We have:
	$$
	X=\int_{\Z_p}\lambda dE_{\lambda }.
	$$
	Let $U$ be a $p$-adic unitary operator on $X$. We say $U$ is a \textbf{spectrum shift operator} if the following condition holds:
	$$
	UX-XU=U.
	$$
	We have:
	\begin{align*}
		UXU^{-1}=X+1&\implies \int_{\Z_p}\lambda UdE_{\lambda }U^{-1}=\int_{\Z_p}\lambda+1 dE_{\lambda }=\int_{\Z_p}\lambda dE_{\lambda-1}\\
		&\implies UdE_{\lambda }U^{-1}=dE_{\lambda-1}.
	\end{align*}
	The conjugate of $U$ make the spectral measure of $X$ shift by 1. Suppose $U$ is a $p$-adic unitary operator of \textbf{continuous type}, then $X$ has a continuous spectrum $\Z_p$ shift by $\left \{U^{t}\right \}_{t\in \Z_p}$. Let $C\left(\Z_p,\CC_p\right)$ be the continuous function of $\Z_p$ valued in $\CC_p$. Suppose:
	$$
	U\left(f\left(x\right) \right) =f\left(x+1\right) ,X\left(f\left(x\right) \right)=xf\left( x\right).
	$$
	We have: $U$ is a spectrum shift operator of $X$. Let:
	\begin{align*}
	 a^{+}=XU^{-1},&\ a^{-}=U-I,\ H=a^{+}a^{-};\\
	&a^{-}a^{+}-a^{-}a^{+}=1.
    \end{align*}
    Then $\left( a^{+},a^{-}\right) $ is the creation and annihilation operators of $H$.
    A non-commutative torus $T_{\xi}$ can be described by the following commutation relation:
    $$
    UV=\xi VU,
    $$
    where $U,V$ is $p$-adic unitary operator, $\xi \in \mathcal{O}_p^{\times}$ is a number.  Let $T_{\xi}$ be the $\mathcal{O}_p$-ultrametric algebra generated by $U,V$. The conjugation of V makes the spectrum of $U$ shift by $\xi$. Suppose $\left|\xi-1\right|< 1,\xi \ne 0$, $T_{\xi}$ has a reduction $T_{\varepsilon,\xi}$ such that $T_{\varepsilon,\xi}$ is commutative even though $T_{\xi}$ is non-commutative. Let $n,m \in \Z$, we have:
    $$
    U^{n}V^{m}=\xi^{nm}V^{m}U^{n}.
    $$
    We can find $n,m$ such that $\xi^{nm} \to 1$. The algebra $T_{\xi^{nm}}$ tends to be a commutative algebra.
\end{ex}
\begin{ex}($p$-adic unitary matrix group)
	Let $\GL_n\left( \Z_p\right)$ be the group of n*n invertible matrix over $\Z_p$, then $\forall U \in \GL_n\left( \Z_p\right)$, $U$ is pro-finite type $p$-adic unitary operator.
	\begin{proof}
		The reduction of $U$ in $\GL_n\left(\Fp\right) $ has a finite order $k$. So we have: 
		$$
		U^{k}\in I+pM_n\left(\Z_p\right)\implies U^{n!}\to I.
		$$
	\end{proof}
	Hence we have the Jordan decomposition of $U$:
	$$
	U=U_sU_n,
	$$
	where $U_s$ is a $p$-adic unitary matrix of \textbf{Teichmüller type}, $U_n$ is a $p$-adic unitary matrix of \textbf{continuous type}. 
	
	Let $\Z_p^n$ be the unit ball in $\Q_p^n$, the nature action of $\GL_n\left(\Z_p\right) $ on $\Z_p^n$ induce a group action on the continuous function on $\Z_p^n$. Let $g \in \GL_n\left(\Z_p\right)$,  $g^*$ be the action on continuous function. It is easy to check that $g^*$ is a $p$-adic unitary operator of \textbf{pro-finite type}.
	
	There exists a more accurately decomposition for matrix in $\GL_n\left(\Fp \right)$. For a arbitrary $k \in \N$, Let $\mathbb{F}_{p^k}$ be the finite field which has $p^k$ elements, $\mathbb{F}_{p^k}^{*}$ act on itself by left multiply, which is $\Fp$-linear. Let $t_k$ be the generator of $\mathbb{F}_{p^k}^{*}$, which can be realized as a $k*k$ invertible matrix $t_k$ in $\GL_k\left(\Fp\right)$ of Teichmüller type:
	$$
	\sigma^{k}\left(t_k\right)=t_k^{p^k}=t_k
	$$
	This is not canonical but it would be useful. Let 
	\begin{align*}
	\iota_k:\GL_k\left(\Fp\right) &\to \GL_n\left(\Fp\right)\\
	U &\mapsto \begin{pmatrix}
		U  & \\
		& I_{n-k}
	\end{pmatrix}
	\end{align*}
	be the embedding map, $T_k=\iota_k\left(t_k\right)$, define the set:
	\begin{align*}
	\Phi&=\left \{ T\in \GL_n\left (\Fp\right ) \  \middle| \ T=T_n^{m_n}...T_{2}^{m_2}T_{1}^{m_1};1\le m_k\le p^k-1,1\le k\le n \right \} \\
    B&=\left \{N \in \GL_n\left (\Fp\right ) \  \middle| \ N=\begin{pmatrix}
	1  & n_{12} & n_{13} & ...\\
	0 & 1 & n_{23} & ...\\
	0 & 0 & 1 & ...\\
	... & ... & ... & 1
\end{pmatrix};n_{ij}\in \Fp,1\le i,j \le n\right \}.
	\end{align*}
We have:
	  \begin{align*}
   &\left | \Phi \right |=\prod_{k=1}^{n}\left (p^k-1\right ),
   \left | B \right |=p^{\frac{n(n-1) }{2}}\\
     &\left | \Phi \right | \left | B\right |= \left | \GL_n\left(\Fp\right)\right |= \prod_{k=1}^{n}\left (p^n-p^k\right )
      \end{align*}
\begin{prop}
	Let $U\in \GL_n\left(\Fp\right)$, there exists a decomposition:
	$$
	U=TN;\ T\in\Phi,N \in B.
	$$
	Moreover, the pair of matrices $\left( T,N\right) $ is unique.
\end{prop}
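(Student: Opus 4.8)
The plan is to prove the stronger statement that the map
$$
\Psi_n\colon\{1,\dots,p^{n}-1\}\times\{1,\dots,p^{n-1}-1\}\times\cdots\times\{1,\dots,p-1\}\times B_n\longrightarrow\GL_n(\Fp),\qquad\big((m_n,\dots,m_1),N\big)\longmapsto T_n^{m_n}\cdots T_1^{m_1}N
$$
is a bijection, where $B_n\subseteq\GL_n(\Fp)$ is the group of upper unitriangular matrices; this is the proposition together with the implicit claim that distinct exponent tuples give distinct elements of $\Phi$ (so that $|\Phi|=\prod_{k=1}^n(p^k-1)$). I would argue by induction on $n$, the case $n=1$ being immediate since $B_1=\{I\}$ and $m\mapsto t_1^{m}$ runs exactly once through $\Fp^{\times}=\GL_1(\Fp)$. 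In the inductive step I would establish injectivity of $\Psi_n$ and then deduce surjectivity from the cardinality identity $\prod_{k=1}^{n}(p^{k}-1)\cdot p^{\binom{n}{2}}=|\GL_n(\Fp)|$ (a standard fact, matching the equality $|\Phi|\,|B|=|\GL_n(\Fp)|$ already recorded above).

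The crux is the following lemma: with $P_n=\{g\in\GL_n(\Fp)\mid\text{the bottom row of }g\text{ is }(0,\dots,0,1)\}$ the mirabolic subgroup, one has $\langle t_n\rangle\cap P_n=\{I\}$. Indeed, by construction $t_n$ is the matrix, in a suitable $\Fp$-basis, of multiplication by a generator $\zeta$ of $\mathbb{F}_{p^{n}}^{\times}$ on $\mathbb{F}_{p^{n}}\cong\Fp^{n}$, so $a\mapsto(\text{mult.\ by }a)$ makes $\mathbb{F}_{p^{n}}$ a subring of $M_n(\Fp)$ and $t_n^{j}-I$ corresponds to multiplication by $\zeta^{j}-1$. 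If $j\not\equiv0\pmod{p^{n}-1}$ then $\zeta^{j}\neq1$, hence $\zeta^{j}-1\neq0$ in the field $\mathbb{F}_{p^{n}}$, hence $t_n^{j}-I$ is invertible and has trivial left kernel; as the nonzero row $(0,\dots,0,1)$ cannot lie in that kernel, $t_n^{j}\notin P_n$. Therefore the $p^{n}-1$ cosets $t_n^{m}P_n$, $m=1,\dots,p^{n}-1$, are pairwise distinct.

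Now suppose $T_n^{m_n}\cdots T_1^{m_1}N=T_n^{m_n'}\cdots T_1^{m_1'}N'$. Since $\iota_k$ is a block embedding, $T_{n-1},\dots,T_1\in P_n$, and $N,N'\in B_n\subseteq P_n$; as $T_n=t_n$ and $P_n$ is a subgroup, the common value lies in $t_n^{m_n}P_n\cap t_n^{m_n'}P_n$, which forces $m_n=m_n'$. Cancelling $t_n^{m_n}$, and using that the standard block embeddings compose (so $T_{n-1}^{m_{n-1}}\cdots T_1^{m_1}=\iota_{n-1}(t_{n-1}^{m_{n-1}}\cdots t_1^{m_1})$ with $\iota_{n-1}$ the embedding of $\GL_{n-1}(\Fp)$), we obtain $\iota_{n-1}(g)=N'N^{-1}\in B_n$ with $g=(t_{n-1}^{m_{n-1}'}\cdots t_1^{m_1'})^{-1}(t_{n-1}^{m_{n-1}}\cdots t_1^{m_1})\in\GL_{n-1}(\Fp)$; since $\iota_{n-1}(g)\in B_n$ precisely when $g\in B_{n-1}$, this reads $t_{n-1}^{m_{n-1}}\cdots t_1^{m_1}\cdot I=(t_{n-1}^{m_{n-1}'}\cdots t_1^{m_1'})\cdot g$ with $g\in B_{n-1}$, i.e.\ $\Psi_{n-1}\big((m_{n-1},\dots,m_1),I\big)=\Psi_{n-1}\big((m_{n-1}',\dots,m_1'),g\big)$. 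Injectivity of $\Psi_{n-1}$ then yields $m_k=m_k'$ for $k<n$ and $g=I$, hence $N=N'$; so $\Psi_n$ is injective, and surjectivity follows by the count. (Unwinding surjectivity also gives the algorithm: peel off $t_n^{m_n}$ using the coset decomposition, write the remainder in $P_n$ as $\iota_{n-1}(P)\begin{pmatrix}I&P^{-1}q\\0&1\end{pmatrix}$, and recurse on $P\in\GL_{n-1}(\Fp)$.)

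The only step that is more than bookkeeping is the lemma $\langle t_n\rangle\cap P_n=\{I\}$ — that no nontrivial power of a Singer cycle has bottom row $(0,\dots,0,1)$ — which is exactly the observation that powers of $\zeta$ are units and hence act invertibly on $\mathbb{F}_{p^{n}}$; everything else is routine block-matrix manipulation together with the inductive hypothesis, and is cross-checked by the cardinality identity.
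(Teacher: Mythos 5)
Your proof is correct and follows essentially the same route as the paper: induction on $n$, the key fact that no nontrivial power of the Singer cycle $t_n$ lies in the mirabolic (affine) subgroup, and the cardinality identity $\left|\Phi\right|\left|B\right|=\left|\GL_n\left(\Fp\right)\right|$ to upgrade injectivity to bijectivity. Your lemma that $t_n^{j}-I$ is invertible because $\zeta^{j}-1$ is a unit in $\mathbb{F}_{p^{n}}$ is in fact a cleaner, fully rigorous version of the paper's eigenvalue-field remark, which as literally stated overlooks that $\mathbb{F}_{p^{k}}$ and $\mathbb{F}_{p^{k+1}}$ both contain $\Fp$; the correct point, which you supply, is that nontrivial powers of $t_n$ have no eigenvalue equal to $1$.
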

\begin{proof}
	First, the case $n=1$ is obvious. Suppose the proposition holds for $n=k,k\in \N$, we show that the proposition holds for $n=k+1$. Let's define the affine transformation subgroup in $\GL_{k+1}\left(\Fp\right)$:
	$$
	\Aff_k=\left \{ g\in\GL_{k+1}\left (\Fp\right ) \  \middle| \ g=\begin{pmatrix}
		a_{11}  & ... & a_{1k} & b_1\\
		... & ... & ... & ...\\
		a_{k1}  & ... & a_{kk} & b_k\\
		0  & ... & 0 & 1
	\end{pmatrix}   \right \}
	$$
	It can be showed that $\Aff_k$ is generated by the following matrix:
	\begin{align*}
	&T=T_k^{m_k}...T_{2}^{m_2}T_{1}^{m_1};1\le m_l\le p^l-1,1\le l\le k \\
	&N=\begin{pmatrix}
		1  & ... & 0 & b_1\\
		... & ... & ... & ...\\
		0  & ... & 1 & b_k\\
		0  & ... & 0 & 1
	\end{pmatrix} ; b_l \in \Fp ,1\le l\le k
	\end{align*}
    Finally, the eigenvalue of elements in $\Aff_k$ lies in $\mathbb{F}_{p^k}$, the eigenvalue of $T_{k+1}$ lies in $\mathbb{F}_{p^{k+1}}$. So we have:
    \begin{align*}
    	&\left \{ T_{k+1}^{m_{}} \right \}_{1\le m\le p^{k+1}-1}\cap \Aff_k=\left \{I_{k+1}\right \}\\
    	&\left | \Aff_k \right |=\left | \GL_k\left ( \Fp \right )  \right |p^k \\
    	&\left | \Aff_k \right |\left( p^{k+1}-1\right) =\left | \GL_{k+1}\left ( \Fp \right )  \right |
    \end{align*}
\end{proof}
Suppose $p\ge 3$, there exists a Teichmüller lift for the set $\Phi$ to the group $\GL_n\left( \Z_p\right) $. Let $\Upsilon \subset \GL_n\left( \Z_p\right)$ be a lift set of $\Phi$. Let $\mathcal{B}$ be the group:
 $$\mathcal{B}=\left \{N \in \GL_n\left (\Z_p\right ) \  \middle| \ 
 N\in \begin{pmatrix}
 	1+p\Z_p & \Z_p & ...\\
 	p\Z_p& 1+p\Z_p & ... \\
 	...& ...& ... 
 \end{pmatrix}
 \right \}.
$$
 \begin{thm}
    	Let $U\in\GL_n\left(\Z_p\right) $, there exists a decomposition:
    	$$
    	U=TN;\  T\in \Upsilon, N\in \mathcal{B}.
    	$$
    	Moreover, the pair of matrices $\left( T,N\right) $ is unique.
 \end{thm}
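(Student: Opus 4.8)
The plan is to deduce the theorem from the finite-field decomposition $\GL_n(\Fp)=\Phi\cdot B$ of the Proposition by lifting along the reduction homomorphism $\rho\colon\GL_n(\Z_p)\twoheadrightarrow\GL_n(\Fp)$. Two preliminary observations handle all the bookkeeping. First, a matrix $N\in M_n(\Z_p)$ lies in $\mathcal{B}$ exactly when every diagonal entry reduces to $1$ and every strictly-below-diagonal entry reduces to $0$, that is, exactly when $\rho(N)\in B$ (and then $\det N\in\Z_p^{\times}$ automatically since $\rho(N)\in\GL_n(\Fp)$); hence $\mathcal{B}=\rho^{-1}(B)$, which in particular is a subgroup sitting in $1\to I+pM_n(\Z_p)\to\mathcal{B}\to B\to 1$, and this is what makes $\mathcal{B}$ the ``group'' claimed. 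Second, by construction $\Upsilon$ is a Teichmüller lift of $\Phi$, so $\rho$ restricts to a bijection $\Upsilon\xrightarrow{\ \sim\ }\Phi$; concretely one Hensel-lifts each building block $T_k$ to a finite-order element $\widetilde T_k\in\GL_n(\Z_p)$ (its order divides $p^{k}-1$, which is prime to $p$) and takes $\Upsilon$ to be the set of the same ordered products, a finite set that reduces bijectively onto $\Phi$.

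For existence I would take $U\in\GL_n(\Z_p)$ and apply the Proposition to $\overline{U}:=\rho(U)$, obtaining the unique pair $(\overline{T}_0,\overline{N}_0)\in\Phi\times B$ with $\overline{U}=\overline{T}_0\,\overline{N}_0$. Letting $T\in\Upsilon$ be the unique lift with $\rho(T)=\overline{T}_0$ and setting $N:=T^{-1}U\in\GL_n(\Z_p)$, one computes $\rho(N)=\overline{T}_0^{-1}\,\overline{U}=\overline{N}_0\in B$, so $N\in\rho^{-1}(B)=\mathcal{B}$ by the first observation, and $U=TN$ is the required decomposition.

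For uniqueness, suppose $U=TN=T'N'$ with $T,T'\in\Upsilon$ and $N,N'\in\mathcal{B}$. Reducing mod $p$ produces two $\Phi\cdot B$ decompositions $\overline{U}=\rho(T)\rho(N)=\rho(T')\rho(N')$, so the uniqueness clause of the Proposition forces $\rho(T)=\rho(T')$ and $\rho(N)=\rho(N')$; the injectivity of $\rho|_{\Upsilon}$ then gives $T=T'$, whence $N=T^{-1}U=T'^{-1}U=N'$.

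The argument is thus a short dévissage along $\rho$, and nothing in it is genuinely difficult once the Proposition is in hand: the step I expect to require the most care — and the single place where the hypothesis $p\ge 3$ enters — is the existence of a bijective Teichmüller lift $\Upsilon$, which itself reduces to Hensel's lemma applied to the prime-to-$p$-order matrices $T_k$ (together with the disjointness statements from the proof of the Proposition that guarantee the lifted products stay distinct). If a fully self-contained proof were preferred, one could instead repeat the induction on $n$ used for the Proposition, replacing the affine subgroup $\Aff_k$ by its congruence analogue $\Aff_k\cap\mathcal{B}$ inside $\GL_{k+1}(\Z_p)$ and tracking the same cardinality identities; but the reduction argument above is shorter and I would take it.
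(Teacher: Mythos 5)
Your proof is correct. The paper actually states this theorem without proof, so there is nothing to compare against line by line, but your dévissage along the reduction map $\rho\colon\GL_n(\Z_p)\to\GL_n(\Fp)$ is clearly the argument the paper intends: it sets up exactly the two ingredients you use, namely the mod-$p$ decomposition $\GL_n(\Fp)=\Phi\cdot B$ with uniqueness, and the Teichm\"uller lift $\Upsilon$ of $\Phi$, and your observations that $\mathcal{B}=\rho^{-1}(B)$ and that $\rho|_{\Upsilon}$ is a bijection onto $\Phi$ are precisely what is needed to transport both existence and uniqueness upstairs.
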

\end{ex}
\begin{ex}($p$-adic time evolution)
	Considering a $p$-adic differential equation which is defined over a $\mathcal{O}_p$-ultrametric Banach space $X$. Let $H$ be a linear operator on $X$, $\psi:\Z_p \to X$ be a continuous function, \textbf{the $p$-adic time evolution equation} is defined as:
	$$
	\frac{\mathrm{d} \psi}{\mathrm{d} t} =H\psi.
	$$
    The $p$-adic time evolution equation has a formal solution:
    $$
    \psi\left(t\right) =e^{Ht}\psi(0), \ t \in p\Z_p.
    $$
    The formal solution sometimes can not be extended continuously to $\Z_p$. Noted that $e^{Ht}$ is a $p$-adic unitary operator of continuous type. We should find a more reasonable definition.
    \begin{defn}
    Let $X$ be a $\mathcal{O}_p$-ultrametric Banach space. Let $H$ be a linear operator on $X$, $U$ be a unitary operator on $X$. Suppose we have: $HU=UH$. Let $\psi_k:\Z_p \to X,k\in \Z$ be a family of continuous function, we say the following equation is \textbf{the $p$-adic time evolution equation}:
    $$
   \begin{cases}
   	\frac{\mathrm{d} \psi_k}{\mathrm{d} t} =H\psi_k,\ \  \forall t\in \Z_p \\
   	\psi_{k+1}=U\psi_k, \ \  \forall k\in \Z
   \end{cases}
    $$
    \end{defn}
    \begin{prop}
    	$$\left | \psi_{k}\left ( t+\varepsilon  \right )  \right |_X= \left | \psi_{k}\left ( t\right )  \right |_X, \ \left |\varepsilon\right |_p<\frac{1}{p\left | H \right |}
    	,
    	\left | \psi_{k+1} \right |_X=\left |\psi_k\right |.
    	$$
    \end{prop}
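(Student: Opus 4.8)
The plan is to reduce the proposition to two ingredients. First, on every disk $\{\,t+\varepsilon : |\varepsilon|_p<\tfrac1{p|H|}\,\}$ the local flow of the equation $\tfrac{d\psi_k}{dt}=H\psi_k$ is realised by a convergent operator exponential, so that $\psi_k(t+\varepsilon)=\exp(H\varepsilon)\,\psi_k(t)$. Second, $\exp(H\varepsilon)$ is a $p$-adic unitary operator, hence preserves the norm of $X$; combining these gives $|\psi_k(t+\varepsilon)|_X=|\psi_k(t)|_X$. The identity $|\psi_{k+1}|_X=|\psi_k|_X$ is easier and is dealt with at the end; note that the hypothesis $UH=HU$ plays no role in the proposition itself (it is needed only to make the two clauses of the time-evolution equation mutually consistent).

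For the exponential, set $x=H\varepsilon$, so that $|x|\le|H|\,|\varepsilon|_p<\tfrac1p\le p^{-1/(p-1)}$. From $v_p(n!)\le\tfrac{n-1}{p-1}$ one obtains $\bigl|\tfrac{x^n}{n!}\bigr|\le|x|\bigl(|x|\,p^{1/(p-1)}\bigr)^{n-1}$, and since $|x|\,p^{1/(p-1)}<1$ the series $\exp(x)=\sum_{n\ge0}\tfrac{x^n}{n!}$ converges in the Banach algebra of bounded operators on $X$, with $|\exp(x)-\mathrm{Id}|=|x|<1$. Hence $\exp(x)$ is invertible with inverse $\exp(-x)$, and $|\exp(x)|=|\exp(-x)|=1$, so $\exp(x)$ is a $p$-adic unitary operator by the third clause of its definition. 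In particular $|\exp(x)v|_X\le|v|_X$ and $|v|_X=|\exp(-x)\exp(x)v|_X\le|\exp(x)v|_X$, so $\exp(x)$ is an isometry of $X$.

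For the local identity, bootstrapping $\tfrac{d\psi_k}{dt}=H\psi_k$ shows $\psi_k$ is smooth with $\psi_k^{(n)}=H^n\psi_k$, hence $|\psi_k^{(n)}(t)|_X\le|H|^n|\psi_k(t)|_X$ for all $n$. The $p$-adic Taylor formula with remainder at $t$ then bounds the error of the degree-$N$ Taylor polynomial by a quantity of order $\bigl(|H|\,|\varepsilon|_p\,p^{1/(p-1)}\bigr)^{N+1}\sup_s|\psi_k(s)|_X$, which goes to $0$ exactly because $|\varepsilon|_p<\tfrac1{p|H|}$. Thus $\psi_k(t+\varepsilon)=\sum_{n\ge0}\tfrac{\varepsilon^n}{n!}\psi_k^{(n)}(t)=\sum_{n\ge0}\tfrac{(H\varepsilon)^n}{n!}\psi_k(t)=\exp(H\varepsilon)\psi_k(t)$, and the previous paragraph gives $|\psi_k(t+\varepsilon)|_X=|\psi_k(t)|_X$. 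For the last assertion, $\psi_{k+1}(t)=U\psi_k(t)$ with $U$ a $p$-adic unitary operator gives $|\psi_{k+1}(t)|_X=|\psi_k(t)|_X$ for every $t$, hence also $\sup_t|\psi_{k+1}(t)|_X=\sup_t|\psi_k(t)|_X$.

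The step I expect to require care is the local identity: one must ensure that a continuous (differentiable) solution of $\psi_k'=H\psi_k$ is regular enough to coincide with its Taylor expansion on the disk of radius $\tfrac1{p|H|}$ — equivalently, that $y'=Hy$ has a unique solution there with prescribed value at the centre, which fails for $p$-adic ODEs without a regularity input. One can close this either by observing that the derivative bounds above force $\psi_k$ to be (strictly) $C^\infty$, so Schikhof's Taylor theorem applies, or by a contraction argument: on a disk of radius $<\tfrac1{p|H|}$ the map $y\mapsto\psi_k(t)+\int_t^{s}H\,y(u)\,du$ has Lipschitz constant $|s-t|\,|H|<\tfrac1p$, so its unique fixed point is simultaneously $\psi_k$ and $\exp(H(s-t))\psi_k(t)$. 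Everything else is a routine ultrametric estimate.
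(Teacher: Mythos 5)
The paper states this proposition without proof, so there is nothing to compare your argument against; on its own terms your strategy is clearly the intended one, and your first two paragraphs are correct and complete. For $\left|\varepsilon\right|_p<\frac{1}{p\left|H\right|}$ one has $\left|H\varepsilon\right|<p^{-1}\le p^{-1/(p-1)}$, the exponential series converges in the operator algebra, $\left|\exp(H\varepsilon)\right|=\left|\exp(-H\varepsilon)\right|=1$, so $\exp(H\varepsilon)$ is a $p$-adic unitary operator and hence an isometry of $X$. The clause $\left|\psi_{k+1}\right|_X=\left|\psi_k\right|_X$ follows at once from condition 2 of the definition of a $p$-adic unitary operator, and you are right that $HU=UH$ plays no role here.

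The genuine gap is exactly the one you flag in your last paragraph, and neither of your two repairs closes it. (a) Bootstrapping $\psi_k^{(n)}=H^n\psi_k$ yields only naive $n$-fold differentiability; over $\Z_p$ this does not imply strict differentiability ($C^n$ in Schikhof's sense), let alone local analyticity, and Schikhof's Taylor estimate is controlled by $\sup\left|\Phi_{N+1}\psi_k\right|$ (the iterated difference quotient over the whole polydisk), which is not bounded by $\left|H\right|^{N+1}\big/\left|(N+1)!\right|_p$ from the ODE alone. (b) The Picard contraction has no foothold because there is no fundamental theorem of calculus over $\Z_p$: a continuous function has many antiderivatives differing by non-constant functions of zero derivative, so a pointwise solution of $\psi_k'=H\psi_k$ need not be a fixed point of any integral operator $y\mapsto \psi_k(t)+\int H y$. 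In fact the proposition is false as literally stated: take $u:\Z_p\to\mathcal{O}_p$ injective with $u'\equiv 0$ (such functions exist) and $\psi$ a genuine solution; then $u\psi$ again satisfies the ODE by the product rule, but $\left|u(t)\psi(t)\right|_X$ is not locally constant. The statement must therefore be read with the implicit regularity hypothesis that $\psi_k$ is the local exponential flow, $\psi_k(t+\varepsilon)=e^{H\varepsilon}\psi_k(t)$ --- which is how the surrounding example introduces its ``formal solution'' --- and under that reading your first two paragraphs already constitute a complete proof.
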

\end{ex}
\section{Framework of $p$-adic quantum mechanics}
In this section, we want to establish a framework of $p$-adic quantum mechanics by spectral theory of $p$-adic operators. We refer to \cite{pMP} \cite{pQM} for previous work.

Let $X$ be a $\CC_p$-ultrametric Banach space, $M$ be the unit ball of $X$, $H$ be a $p$-adic Hermite operator on $X$, $U$ be a $p$-adic unitary operator on $X$, $\pi$ be a orthogonal projection operator on $X$, $\psi\in X, \left |\psi \right |_X=1$ be a $p$-adic wave function.

\begin{ax}($p$-adic probability interpretation)
The $p$-adic quantum system is described by a $p$-adic wave function $\psi$ in a $\CC_p$-ultrametric Banach space $X$. The event is represented by $\pi$. The probability is given by $\left |\pi\left (\psi\right )  \right |_X$. Suppose the event $\pi$ can be decomposed into a family of disjoint event $\left \{\pi_i\right \}_{i=1}^{k}$ satisfy the following condition:
$$
\pi=\sum_{i=1}^{k} \pi_{i},\ \ \pi_{i}\pi_{j}=\delta_{ij}\pi_i, \ \forall 1\le i,j \le k.
$$
Then the probability of $\pi$ is the supremum of $\left \{\pi_i\right \}_{i=1}^{k}$:
$$
\left |\pi\left(x\right)\right |=\sup_{1\le i\le k} \left |\pi_i\left(x\right)\right |.
$$
Moreover, the right projection functor with respect to $U$ has a similar probability interpretation.
\end{ax}
\begin{ax}($p$-adic time evolution)
The $p$-adic time evolution is described by the operator $\left(H,U\right)$, where $H$ is a $p$-adic linear operator (not necessary $p$-adic Hermite), $U$ is a $p$-adic unitary operator, $HU=UH$. $H$ represents a continuous evolution, $U$ represents a discrete evolution.
The $p$-adic time evolution equation is given by:
$$
\begin{cases}
	\frac{\mathrm{d} \psi_k}{\mathrm{d} t} =H\psi_k,\ \  \forall t\in \Z_p \\
	\psi_{k+1}=U\psi_k, \ \  \forall k\in \Z.
\end{cases}
$$
\end{ax}
\begin{ax}($p$-adic observable and $p$-adic spectrum)
Suppose $H$ is a $p$-adic Hermite operator of period $1$ with the symmetry of Galois group $\Gal\left(\Q_p^{ur}\mid \Q_p\right) $, there exists a orthogonal projection valued spectral integral:
$$
I=\int_{\Q_p}dE_{\lambda }  \ \ \   H=\int_{\Q_p}\lambda dE_{\lambda }.
$$

Suppose $U$ is a $p$-adic unitary operator, then $X$ has a projection functor valued spectral measure. Let $\psi \in M, \left|\psi\right|=1$ be a $p$-adic wave function, the spectrum decomposition:
$$
    \Sp(M)=\left \{ M_{\infty};M_{\varepsilon ,\lambda,tor};M_{\varepsilon,\lambda},\forall \varepsilon\in \mathfrak{m}_p^{+},\lambda \in T\left ( \mathcal{O}_p^{\times} \right )\right \}
$$
is complete.
\end{ax}

\begin{ax}($p$-adic quantum measurement)
The $p$-adic quantum measurement is a right projection functor $$\pi_{I}: \GB_S \to \GB_S,$$
where $M$ is viewed as $\GB_S$ module, $\psi \in M$. After the $p$-adic quantum measurement, the wave function $\psi$ restricts to $\pi_{I}\left(\psi\right)$.
\end{ax}

\section{Further discussion}
Here are some questions remain unsolved.
\begin{q}
	The behavior of open ideal seems to be a system which is both discrete and continuous. The reduction morphism makes $\G$ ignore the diameter which is smaller than $\varepsilon$. What is the $p$-adic $\zeta$ function of $\G$?
\end{q}
\begin{q}
	The abelian extension theory of $\Q_p$ corresponds to some topological properties of $p$-adic unitary operator, which connects the class field theory with functional analysis. What is the connection between non-commutative Iwasawa theory and $p$-adic spectral theory?
\end{q}
\begin{q}
	The geometry model of $p$-adic numbers and pro-finite groups are like some fractals. What is the definition of $p$-adic path-integral?
\end{q}

\bibliographystyle{plain}
\bibliography{Ref}

\end{document}